\newtheorem{theorem}{Theorem}
\newtheorem{lemma}[theorem]{Lemma}
\newtheorem*{lemma*}{Lemma}
\newtheorem{proposition}[theorem]{Proposition}
\def\w#1{\mathop{:}\nolimits\!#1\!\mathop{:}\nolimits}
\newcommand{\norm}[1]{\Vert #1\Vert}
  \newtheorem{remark}[theorem]{Remark}
\theoremstyle{remark}
\renewcommand{\Re}{\mathrm{Re}\, }
\renewcommand{\Im}{\mathrm{Im}\,}
\newcommand\otimesal{\mathop{\hbox{\raise 1.6 ex
  \hbox{$\scriptscriptstyle\mathrm{al}$}
\kern -0.92 em \hbox{$\otimes$}}}}
\newcommand\oplusal{\mathop{\hbox{\raise 1.6 ex
  \hbox{$\scriptscriptstyle\mathrm{al}$}
\kern -0.92 em \hbox{$\oplus$}}}}
\newcommand\Gammal{\hbox{\raise 1.7 ex
\hbox{$\scriptscriptstyle\mathrm{al}$}\kern -0.50 em $\Gamma$}}
   \let\ep=\epsilon
  \let\ga=\gamma 
 \let\la=\lambda
 \let\Ga=\Gamma \let\La=\Lambda
\newcommand{\caA}{{\mathcal A}}
\newcommand{\caB}{{\mathcal B}}
\newcommand{\caC}{{\mathcal C}}
\newcommand{\caD}{{\mathcal D}}
\newcommand{\caE}{{\mathcal E}}
\newcommand{\caF}{{\mathcal F}}
\newcommand{\caG}{{\mathcal G}}
\newcommand{\caH}{{\mathcal H}}
\newcommand{\caK}{{\mathcal K}}
\newcommand{\caL}{{\mathcal L}}
\newcommand{\caM}{{\mathcal M}}
\newcommand{\caO}{{\mathcal O}}
\newcommand{\caR}{{\mathcal R}}
\newcommand{\caS}{{\mathcal S}}
\newcommand{\caV}{{\mathcal V}}
\newcommand{\caW}{{\mathcal W}}
\newcommand{\caY}{{\mathcal Y}}
\newcommand{\scrG}{{\mathscr G}}
\newcommand{\bbC}{{\mathbb C}}
\newcommand{\bbE}{{\mathbb E}}
\newcommand{\bbP}{{\mathbb P}}
\newcommand{\bbR}{{\mathbb R}}
\newcommand{\bbT}{{\mathbb T}}
\newcommand{\bbZ}{{\mathbb Z}}
\newcommand{\no}{{\noindent}}
\newcommand{\opunit}{\text{1}\kern-0.22em\text{l}}
\newcommand{\funit}{\mathbf{1}}
\newcommand{\frc}{{\mathfrak c}}
\newcommand{\frm}{{\mathfrak m}}
\newcommand{\frs}{{\mathfrak s}}
\newcommand{\frz}{{\mathfrak z}}
\newcommand{\frC}{{\mathfrak C}}
\newcommand{\non}{\nonumber}
\newcommand{\iu}{{\mathrm i}}
\newcommand{\beq}{ \begin{equation} }
\newcommand{\eeq}{ \end{equation} }
\newcommand{\bet}{ \begin{theorem} }
\newcommand{\eet}{ \end{theorem} }
\newcommand{\baq}{\begin{eqnarray}}
\newcommand{\eaq}{\end{eqnarray}}
 \newcounter{smallarabics}
\newenvironment{arabicenumerate}
{\begin{list}{{\normalfont\textrm{\arabic{smallarabics})}}}
  {\usecounter{smallarabics}\setlength{\itemindent}{0cm}
  \setlength{\leftmargin}{5ex}\setlength{\labelwidth}{4ex}
  \setlength{\topsep}{0.75\parsep}\setlength{\partopsep}{0ex}
   \setlength{\itemsep}{0ex}}}
{\end{list}}
\newcounter{smallroman}
\newcommand{\ben}{\begin{arabicenumerate}}
\newcommand{\een}{\end{arabicenumerate}}
\newcommand{\dist}{\mathrm{dist}}
\newcommand{{\banone}}{\scrG} 
\newcommand{{\gamzero}}{\ga_0} 
\newcommand{{\tengam}}{10\gamzero} 
\newcommand{{\fifteengam}}{15\gamzero} 
\newcommand{{\twentygam}}{20\gamzero} 
\newcommand{{\normba}}{\norm^{}_{\banone}}
\newcommand{\hf}{{_1\over^2}}
\newcommand{\E}{{\mathbb E}}
\newcommand{\Z}{{\mathbb Z}}
\newcommand{\R}{{\mathbb R}}
\newcommand{\C}{{\mathbb C\hspace{0.05 ex}}}
\newcommand{\T}{{\mathbb T}}
\title{Renormalization of Generalized KPZ equation  }
\author{Antti Kupiainen and Matteo Marcozzi}
\address{University of Helsinki, Department of Mathematics and Statistics,
         P.O. Box 68 , FIN-00014 University of Helsinki, Finland}
\email{antti.kupiainen@helsinki.fi}
\email{matteo.marcozzi@helsinki.fi}
\thanks{Supported by Academy of Finland}
\date{\today}
\begin{document}
\maketitle

\begin{abstract}
We use Renormalization Group  to prove local well posedness for a generalized KPZ equation introduced by H. Spohn in the context of  stochastic hydrodynamics.  The equation requires the addition of counter terms diverging with a cutoff $\epsilon$ as $\epsilon^{-1}$ and  $\log\epsilon^{-1}$. 
 \end{abstract}




\section{Introduction}
Nonlinear stochastic  PDE's driven by a space time white noise have been under intensive study in recent years  \cite{hairer, CC, GoJa, GIP, AK}. These equations are of the form
\beq
 \partial_t u=\Delta u+V(u)+\Xi 
  \label{eq: upde}
\eeq
where $u(t,x)\in\bbR^n$ is defined on $\La\subset\bbR^d$, $V(u)$ is a function of $u$ and possibly  its derivatives which can also be non-local  and $\Xi$ is white noise on $\bbR\times \La$,
formally
\beq
 \bbE\ \Xi_\alpha(t',x')\Xi_\beta(t,x)=\delta_{\alpha\beta}\delta(t'-t)\delta(x'-x).
  \label{eq: white}
\eeq
In order to be defined these equations in general require renormalization. One first regularizes the equation by e.g. replacing the noise by a mollified version $\Xi^{(\epsilon)}$ which  is smooth on scales less than $\epsilon$ and then replaces $V$ by  $V^{(\epsilon)}=V+W^{(\epsilon)}$ where $W^{(\epsilon)}$ is an $\epsilon$-dependent "counter term". One attempts to choose this so that  solutions converge as  $\epsilon\to 0$.

The
rationale of such counterterms is that  although they diverge as $\epsilon\to 0$ their effect on solutions on scales much bigger than $\epsilon$ is small. They are needed to make the equation well posed in small scales but they disturb it little in large scales.

Such a phenomenon is familiar in quantum field theory. For instance in quantum electrodynamics the "bare" mass and charge of the electron have to be made cutoff dependent so as to have cutoff independent measurements at fixed scales. The modern way to do this is to use the Renormalization Group (RG) method which constructs a one parameter family of {\it effective theories} describing how the parameters of the theory vary with scale.

Such a RG method was applied to SPDE's in \cite{AK} for the case $n=1$, $d=3$ and  $V(u)=u^3$. In that case $W^{(\epsilon)}=(a\epsilon^{-1}+b\log \epsilon)u$ and path wise solutions were constructed recovering earlier results by \cite{hairer, CC}. In the present paper we consider the equations of Stochastic Hydrodynamics recently introduced by Spohn \cite{Sp14}. They give rise to the problem \eqref{eq: upde} with $n=3$, $d=1$ and 
\beq
 V(u)=(\partial_xu,M\partial_xu)
  \label{eq: kpznonl}
\eeq
where $ (\cdot, \cdot)$ denotes the standard inner product in $ \R^3$ and $M=(M^{(1)},M^{(2)},M^{(3)})$ with $M^{(i)}$ are symmetric matrices, so that \eqref{eq: kpznonl} can be read component-wise as $V_i(u)=(\partial_xu,M^{(i)}\partial_xu) $ for $ i=1,2,3$. 
We construct path wise solutions in this case by taking
$$
W^{(\epsilon)}=a\epsilon^{-1}+b\log \epsilon.$$
The case $n=1$ is the KPZ equation and this was constructed before by Hairer \cite{hairerkpz}. In that case $b=0$. For a generic $M_{\alpha\beta\gamma}$  in  \eqref{eq: kpznonl} $b\neq0$. This counter term is third order in the nonlinearity as will be explained below. Thus in this case the simple Wick ordering of the nonlinearity does not suffice to make the equation well posed.

The content of the paper is as follows. In section 2 we define the model and state the result. The
RG formalism is set up in a heuristic fashion in Section 3. Section 4 discusses the leading perturbative solution and sets up the fixed point problem for the remainder. Section 5 states the estimates
for the perturbative noise contributions and in Section 6 the functional spaces for RG are
defined and the fixed point problem solved.  The main result is proved in Section 7. Finally in Sections 8 estimates for the covariances of the various noise contributions are proved.

\section{The regularized equation and main result}

We consider the equation \eqref{eq: upde} with $u(t,x)$ defined on  $(t,x)\in \bbR\times \bbT$ and nonlinearity given by \eqref{eq: kpznonl}. We study its integral form
\beq
u=G\ast[(V(u)+\Xi )\funit_{t\geq 0}] +e^{t\Delta}u_0  \label{eq:inteq0 }
\eeq
where $G(t,x)=e^{t\Delta}(x,0)$ and $u_0$ is the initial condition.  In this paper we consider a random initial condition of Brownian type. Concretely we take $u_0$ the stationary solution to the linear problem $V=0$ which is the Gaussian random field with covariance
$$
\bbE u_0(x)u_0(y)=\sum_{n\in\bbZ\setminus \{0\}} \frac{e^{2\pi in(x-y)}}{2(2\pi n)^{2}}.
$$
$\Xi$ is taken to be the white noise with vanishing spatial average i.e.
$$
\Xi(t,x)=\sum_{n\in\bbZ\setminus \{0\}}e^{2\pi inx}\dot b_n(t)
$$
with $b_n=\bar b_{-n}$ independent complex Brownian motions. Thus \eqref{eq:inteq0 } can be written in the form
\beq
u=G\ast(V(u)\funit_{t\geq 0}+\Xi )   \label{eq:inteq }
\eeq
 Instead of mollifying the noise we regularize the convolution by considering
\beq
u=G_\epsilon\ast(V^{(\epsilon)}(u)+\Xi )  \label{eq:inteqeps }
\eeq
where 
\beq
G_\epsilon(t,x)=e^{t\Delta}(x,0)(1-\chi(\epsilon^{-2}t))\label{eq:cutoff}
\eeq
with $\chi\geq 0$ being a smooth bump, $\chi(t)=1$ for $t\in[0,1]$ and $\chi(t)=0$ for $t\in[2,\infty)$ and
\beq
V^{(\epsilon)}(u)=[(\partial u,M\partial u)+C_\epsilon]\funit_{t\geq 0}  \label{eq:Feps }
\eeq
We look for $C_\epsilon$ such that  \eqref{eq:inteqeps } has a unique solution  $u^{(\ep)}$
which converges as $\epsilon\to 0$ to a non trivial limit. Note that $G_\epsilon\ast\Xi$ is a.s. smooth. 
 
 Our main result is

\begin{theorem}\label{main result}  There exits $C_\ep$ s.t. the following holds. For almost all realizations of the white noise $\Xi$ there exists  $t(\Xi)>0$  such that the equation \eqref{eq:inteqeps }
has for all $\ep>0$ a unique smooth solution $u^{(\ep)}(t,x)$,
$t\in [0,t(\Xi)]$ and there exists $u\in \caD'( [0,t(\Xi)]\times \bbT)$ such that
$u^{(\ep)}\to\ u$ in  $\caD'( [0,t(\Xi)]\times \bbT)$. The limit $u$ is independent
of the regularization function $\chi$. 
\end{theorem}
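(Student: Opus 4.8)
The proof follows the Renormalization Group (RG) strategy of \cite{AK}, adapted to the quadratic derivative nonlinearity \eqref{eq: kpznonl}. For each fixed $\ep>0$ the field $\eta^{(\ep)}:=G_\ep\ast\Xi$ is a.s.\ smooth, since the short-time singularity of the heat kernel has been excised in \eqref{eq:cutoff}; hence \eqref{eq:inteqeps } is an ordinary semilinear heat equation with smooth forcing, and an elementary Banach contraction (for solutions with one continuous spatial derivative) gives a unique local smooth solution on a possibly $\ep$-dependent time interval. What is not obvious, and is the whole content of the theorem, is that the existence time and the solution itself can be controlled uniformly as $\ep\to0$, a regime in which $\eta^{(\ep)}$ converges only to a distribution of spatial regularity $C^{1/2-}$, so that $(\partial\eta^{(\ep)})^2$ ceases to make classical sense. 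The plan is to iterate \eqref{eq:inteqeps } around $\eta^{(\ep)}$ up to third order in the nonlinearity and to peel off the finitely many explicit terms carrying the divergences: at first order the quadratic $(\partial\eta^{(\ep)},M\partial\eta^{(\ep)})$ must be Wick-ordered, its mean $\bbE\,(\partial\eta^{(\ep)},M\partial\eta^{(\ep)})$ being a constant of size $\ep^{-1}$; at third order in the nonlinearity — the term quartic in $\eta^{(\ep)}$ — a partial Wick contraction produces a further, logarithmically divergent, constant. We choose $C_\ep$ so that its $\ep^{-1}$ piece cancels the divergent mean, leaving the Wick square $\w{(\partial\eta^{(\ep)},M\partial\eta^{(\ep)})}$, and so that its $\log\ep$ piece cancels the cubic contraction; this fixes the divergent part $a\ep^{-1}+b\log\ep$ of $C_\ep$ announced in the introduction, with $b\neq0$ for generic $M$. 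Writing $u^{(\ep)}=\Phi^{(\ep)}+v^{(\ep)}$, with $\Phi^{(\ep)}$ the sum of these renormalized explicit contributions (the construction of Section 4), the remainder solves a fixed point equation $v=\caT^{(\ep)}(v)$ in which $\partial\eta^{(\ep)}$ never again occurs squared by itself — it is always tied to a factor $\partial v$ or to an already-renormalized explicit field — but in which the product $\partial\eta^{(\ep)}\cdot\partial v$ is exactly marginal, which is precisely why the logarithmic counterterm and a multiscale treatment are needed.

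The heart of the argument is to solve $v=\caT^{(\ep)}(v)$ with $\ep$-uniform bounds, and here the RG enters. One decomposes $G_\ep=\sum_{n=0}^{N}G^{[n]}$ into dyadic scales with $L^{-N}\sim\ep$ and runs the RG flow outward from scale $\ep$: integrating out the shell at scale $n$ turns the effective equation at that scale into one at scale $n+1$, generating a correction to the effective nonlinearity, an increment of the counterterm, and new explicit ``effective noise'' fields built from the $G^{[n]}$. Using the covariance bounds of Section 5 (proved in Section 8) together with Wick's theorem, Gaussian hypercontractivity, and a Kolmogorov-type continuity argument, one shows by induction on $n$ that for almost every realization of $\Xi$ there is a time $t(\Xi)>0$ on which every effective noise field has norm bounded uniformly in $n$ and $\ep$ in the RG function spaces of Section 6, and on which the effective nonlinearity stays in a small neighbourhood of the original one — this last point being exactly where the divergent part of $C_\ep$ is absorbed, scale by scale. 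On $[0,t(\Xi)]$ the map $\caT^{(\ep)}$ is then a contraction on a small ball with constants independent of $\ep$, producing $v^{(\ep)}$ with an $\ep$-uniform bound; the choice of $u_0$ as the stationary linear solution makes the underlying Gaussian fields time-stationary and streamlines the estimates. The main obstacle is exactly this multiscale bookkeeping: identifying which explicit terms must be peeled off for $\caT^{(\ep)}$ to be contractive, verifying the marginal power counting order by order, and propagating the contraction estimate through all $N$ scales with constants that do not degenerate as $N\to\infty$.

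Convergence then follows from the uniform bounds. The renormalized explicit terms in $\Phi^{(\ep)}$ converge as $\ep\to0$ in the relevant negative-regularity spaces — again a consequence of the covariance estimates, the counterterm having removed the divergences — and the contraction estimate for $\caT^{(\ep)}$ depends continuously on these inputs, so $v^{(\ep)}$ is Cauchy; hence $u^{(\ep)}=\Phi^{(\ep)}+v^{(\ep)}$ converges in $\caD'([0,t(\Xi)]\times\bbT)$ to a limit $u$ that is generally not a function. For the independence of $\chi$: two admissible bump functions give propagators $G_\ep$ and $\tilde G_\ep$ differing only by a smooth kernel supported at times of order $\ep^2$, so the corresponding RG flows differ only through the first $O(1)$ scales, which are integrated out; since $C_\ep=C_\ep[\chi]$ is built precisely to cancel the $\chi$-dependent ultraviolet divergences of $G_\ep$, the renormalized explicit terms — and therefore, by the uniform contraction estimate, $v^{(\ep)}$ — have $\chi$-independent limits, so $u$ does not depend on $\chi$.
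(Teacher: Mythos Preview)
Your outline is essentially the paper's RG approach from \cite{AK}: third-order perturbative expansion with the two counterterms ($\ep^{-1}$ from Wick ordering, $\log\ep$ from the quartic-in-$\vartheta$ term), covariance bounds plus hypercontractivity for the random inputs, and contraction of the remainder scale by scale. The identification of the marginal product $\partial\eta\cdot\partial v$ as the reason one must go to third order is exactly right.

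One organizational difference is worth flagging. You frame the argument as a single global decomposition $u^{(\ep)}=\Phi^{(\ep)}+v^{(\ep)}$ followed by one contraction map $\caT^{(\ep)}$ whose uniformity is established by multiscale bookkeeping. The paper instead runs the RG \emph{directly on the effective nonlinearity}: one solves a fixed-point problem at \emph{each} scale for $w^{(N)}_{n-1}$ in terms of $w^{(N)}_n$ (Proposition~\ref{prop: solution of fp}), peeling off the perturbative pieces $u_{n,i}$ scale by scale rather than once globally, and then reconstructs the solution through a separate iteration of maps $f^{(N)}_n$ (Proposition~\ref{prop: solution of fiteration}). At the bottom scale $m$ the effective equation \eqref{eq:neweqn } is solved \emph{trivially} by $\phi_m=0$, since the cutoff kernel $G_1$ vanishes on the unit time interval; there is no residual contraction to perform there. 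The random time is $t(\Xi)=\tfrac12 L^{-2m}$ where $m$ is the least index for which the noise event $\caE_m$ of Section~5 holds, and this is obtained by Borel--Cantelli (Proposition~\ref{prop: mainproba}) rather than a Kolmogorov argument. Your single-decomposition picture is a legitimate alternative viewpoint, but the paper's machinery is the sequence of scale-$n$ fixed points together with the $f_n$ reconstruction, and you should be aware that this is where the actual work sits.

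Two minor imprecisions: the effective noise fields are not bounded \emph{uniformly} in $n$ but grow like $L^{\gamma n}$ (condition~\eqref{eq: amevent1}); this is compensated because the effective coupling is $L^{-n/2}$ and the analytic domain $B_n$ has radius $L^{2\gamma n}$. And the effective nonlinearity does not stay near the original one; rather $|||w_n^{(N)}|||_{B_n}\leq L^{-n/4}$ decays, which is what drives the whole iteration.
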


\begin{remark}\label{rem: renorm} 
We will find that the renormalization parameter is given by 
\beq
C_\epsilon=m_1\epsilon^{-1}+m_2\log\epsilon^{-1} +m_3  \label{eq: reps}
\eeq
where the constants $m_1$ and $m_3$ depend on $\chi$ whereas the  $m_2$ is universal
i.e. independent on  $\chi$.
Furthermore, $m_2=0$ if $ M^{(\alpha)}_{\beta \gamma}$ is totally symmetric in the three indices.
\end{remark}



\section{Renormalization group}


The regularized equation \eqref{eq:inteqeps } can be viewed as dealing with spatial scales larger than $\epsilon$. The idea of the Renormalization Group (RG) is to try to increase this small scale cutoff by deriving {\it effective equations} with larger cutoffs. This will be done inductively by going from scale $\ell$ to scale  $L\ell$ with with $L$ fixed. One such step is called the RG transformation. It is useful to utilize the underlying scale invariance of the linear part of the equation and rescale at each step the small scale cutoff to unity. 
To do this define the space time scaling $s_\mu$ by
\beq
 (s_\mu f)(t,x)=\mu^{-\hf} f(\mu^{2}t,\mu x)
 \non
\eeq
and set
\beq
\varphi=s_{\epsilon}u. \label{eq: phindef}
\eeq
Note that $\varphi$ is defined on  $\bbR\times \epsilon^{-1}\bbT$.
By a simple change of variables in  \eqref{eq:inteqeps } we obtain
\beq
 \label{eq:neweq1 }
\varphi= G_1\ast (v^{(\epsilon)}(\varphi)+\xi)
\eeq
where 
\beq
v^{(\epsilon)}(\varphi):=\epsilon^{\frac{1}{2}}(\partial_x \varphi,M\partial_x \varphi)+\epsilon^{\frac{3}{2}}C_\epsilon
\label{eq: vNdef}
\eeq
and $\xi:=\epsilon^{2}s_{\epsilon}\Xi $ is equal in law with the white noise on $\bbR\times \epsilon^{-1}\bbT$ (we keep the convention that $v^{(\epsilon)}(\varphi)=0$ for $t<0$) . 

We note that in these dimensionless variables the small scale cutoff is unity and the strength of the nonlinearity is small, $\epsilon^\hf$ i.e. the model is {\it subcritical}. However, the price we pay is that we need to consider times of order $\epsilon^{-2}$ and spatial box of size $\epsilon^{-1}$.

Let us now attempt to increase the cutoff $\epsilon$. Fix $L>1$ and decompose
$$
G_1=G_{L^{2}}+(G_1-G_{L^{2}})
$$
and
$$
\varphi= \varphi_1+\varphi_2.
$$
Then  \eqref{eq:neweq1 } is equivalent to the pair of equations
\baq
\varphi_1&=&G_{L^{2}}(v^{(\epsilon)}(\varphi_1+\varphi_2)+\xi)\non\\
\varphi_2&=&(G_1-G_{L^{2}})(v^{(\epsilon)}(\varphi_1+\varphi_2)+\xi).\non
\eaq
$\varphi_1$ can be thought of living on scales $\geq L$ and $\varphi_2$  on scales $\in [1,L]$. Rescale now back to unit cutoff. Let $s:=s_{L^{-1}}$ and set 
$$ \varphi_1=s \varphi',\ \ \   \varphi_2=s \zeta.
$$ 
Then
\beq
\varphi=s(\varphi'+\zeta)
\label{solutions}
\eeq
with $\varphi'$, $\zeta$ solutions to
\baq
\varphi'&=&G_1\ast(Sv^{(\epsilon)}(\varphi'+\zeta)+\xi)\label{h'}\\
\zeta&=&\Gamma\ast (Sv^{(\epsilon)}(\varphi'+\zeta)+\xi)\label{zeta}
\eaq
where we defined the scaling operation 
$$ (Sv)(\varphi)=L^{2}s^{-1} v(s \varphi)$$ 
and denoted
\beq
 \label{eq:Gammadef }
\Gamma(t,x):=
e^{t\Delta}(x,0)(\chi(t)-\chi(L^2t)).
\eeq
Note that  $\Ga$ involves scales between $L^{-1} $ and $1$ so that the equation \eqref{zeta}  turns out to be tractable: its solution  $\zeta$ is a function $\zeta(\varphi')$ of $\varphi'$. Plugging this into the large scale equation \eqref{h'} yields
\beq
 \label{eq:neweq2 }
\varphi'= G_1\ast (\caR v^{(\epsilon)}(\varphi')+\xi)
\eeq
where the new nonlinearity $\caR v^{(\epsilon)}$ is defined by
\beq
 \label{eq:neweq3 }
\caR v^{(\epsilon)}(\varphi')=S v^{(\epsilon)}(\varphi'+\zeta(\varphi')).
\eeq
$\caR$ is the {\it Renormalization Group map}: given a function $v$ mapping a field $\varphi(t,x)$ to a field $v(\varphi)(t,x)$ we obtain a new function $\caR v$ by solving the small scale equation. 
Using \eqref{zeta} in \eqref{eq:neweq3 } we may write the latter as an equation to determine $\caR v$: 
\beq
 \label{eq:neweq4 }
\caR v(\varphi)=Sv(\varphi+\Gamma\ast(\caR v(\varphi)+\xi)).
\eeq 
We will set up the functional spaces where \eqref{eq:neweq4 } is solved in Section 6. At this point let us see on a formal level how the solution of the original  SPDE is reduced to the study of the map $\caR$. To do this it is convenient to take the cutoff $\epsilon$ as
\beq
 \label{epsdef}
\epsilon=L^{-N}
\eeq 
so that we are interested in the limit $N\to\infty$. With a slight abuse of notation, denote $ v^{(\epsilon)}$ by  $ v^{(N)}$ and define inductively
\beq
 \label{eq:neweq44 }
 v^{(N)}_{n-1}:=\caR v_n^{(N)}.
\eeq
for $n=N,N-1,\dots$. 

We call $ v^{(N)}_n$ the {\it effective potential} at scale $L^{-n}$ starting with cutoff $L^{-N}$. They are related to each other by the iteration
\beq
v^{(N)}_{n-1}(\varphi)=
Sv^{(N)}_n(\varphi+\Gamma_{n}\ast(v^{(N)}_{n-1}(\varphi)+\xi_{n-1}))
 \label{eq: vn+1new}
\eeq
where we denote explicitly the dependence of the noise on the scale:
$$\xi_n:=L^{-2n}s^{-n}\Xi.$$
$\xi_n$ equals in law the white noise in $\bbR\times L^n\bbT$. $\Gamma_{n}$ is the operator \eqref{eq:Gammadef } on  $\bbR\times L^n\bbT$.
\begin{remark}\label{scaledep} The definition of $\caR$ involves the scale $n$ i.e. the size $L^n$ of the spatial box where the heat kernel in \eqref{eq:Gammadef } is defined. We suppress this dependence in the notation unless we want to emphasize it.
\end{remark}
From \eqref{solutions} we infer that solutions to the equations $v$ and $v'=\caR v$ are related by
$$
\varphi=s(\varphi'+\Gamma\ast (v'(\varphi')+\xi)).
$$
This leads to an iterative construction of the solution as follows.  Suppose  $\varphi_n$ solves the {\it effective equation}
\beq
 \label{eq:neweqn }
\varphi_n= G_1\ast (v^{(N)}_n(\varphi_n)+\xi_n).
\eeq
Then,
the solution of the original equation \eqref{eq:neweq1 }  is given by
 \beq
\varphi=s^{-(N-n)}
f^{(N)}_n(\varphi_n) .\label{eq: finalsolution}
\eeq 
where the maps $f^{(N)}_n$ satisfy the induction
\beq
f^{(N)}_{n-1}(\varphi)=
L^{-2}Sf^{(N)}_n(\varphi+\Gamma_{n}\ast(v^{(N)}_{n-1}(\varphi)+\xi_{n-1}))  \label{eq: fn+1new}
\eeq
with the initial condition
\beq
f^{(N)}_{N}(\varphi)=\varphi \label{eq:initialf }.
\eeq
Recalling \eqref{eq: phindef} we conclude that the solution of the SPDE with cutoff $\epsilon$ is given by
 \beq
u=s^{n}
f^{(N)}_n(\varphi_n) .\label{eq: finalsolution}
\eeq 
Suppose now that (a) we can control the $v^{(N)}_n$ and $f^{(N)}_n$ for $n\geq m$, (b)  we can solve \eqref{eq:neweqn } for $n=m$ on the time interval $[0,1]$ (c) the solution $\varphi_m$ is in the domain of  $f^{(N)}_m$. Then \eqref{eq: finalsolution} yields the solution of the SPDE on the time interval $[0,L^{-2m}]$. 

What determines the smallest $m$ so that (a)-(c) hold? This is determined by the {\it realization of the noise $\Xi$}. Indeed, the  $v^{(N)}_n$ are {\it random} objects i.e. functions of the  white noise $\Xi$. Let $\caE_m$ be the event such that the above holds for all $N,n$ with $m\leq n\leq N$. We will show that almost surely $\caE_m$ holds for some $m<\infty$. For a precise statement see Section 5.


Equations \eqref{eq:neweq1 }, \eqref{eq: vn+1new} and \eqref{eq: fn+1new} involve the convolution operators $\Gamma_n$ and  $G_1$ respectively. These operators are  infinitely smoothing and their kernels have fast decay in space time.
  In particular  the noise $\zeta=\Gamma_n \ast \xi_{n-1}$ entering equations  \eqref{eq: vn+1new} and \eqref{eq: fn+1new} has a smooth covariance which has
finite range in time
and it has Gaussian decay in space. Hence the fixed point problem  \eqref{eq: vn+1new} turns out to be quite easy.


\section{Perturbative contributions}

The RG iteration we have defined is quite general: formally it holds for ``arbitrary'' nonlinearity $v$ (and in any dimension as well, with appropriate scaling $s$). In the case at hand $v$ is a function of $\partial_x \varphi$ so it pays to change variables and denote 
$$
\phi:=\partial_x \varphi.
$$
Denote also 
$$
v^{(N)}_n(\varphi)=w^{(N)}_n(\phi)
$$
and redefine the scaling operation as
$$
(\frs\phi)(t,x)=L^{-\hf}\phi(L^{-2}t,L^{-1}x)
$$
and 
$$
(\caS  v)(\phi)=L \frs^{-1}v(\frs\phi)
$$
so that the RG iteration \eqref{eq: vn+1new} becomes
\beq
w^{(N)}_{n-1}(\phi)=
\caS w^{(N)}_n(
\phi+\Upsilon_n\ast(w^{(N)}_{n-1}(\phi)+\xi_{n-1}))
 \label{eq: wn+1new}
\eeq
where 
$$\Upsilon_n=\partial_x\Gamma_n
.$$
Eq.  \eqref{eq: fn+1new} in turn becomes
\beq
f^{(N)}_{n-1}(\phi)=L^{-1}
\caS f^{(N)}_n(
\phi+\Upsilon_{n}\ast(w^{(N)}_{n-1}(\phi)+\xi_{n-1}))  \label{eq: fn+1new2}
\eeq
and we have the initial conditions
\baq
 w^{(N)}_N(\phi)&=&L^{-\frac{1}{2}N} (\phi,M\phi)-L^{-\frac{3}{2}N}C_{L^{-N}}  
 \label{eq: first v}\\
  f^{(N)}_N(\phi)&=&\phi .\label{eq: first f}
\eaq
From now on to avoid too many indices we suppress in the notation the superscript $(N)$ so that $N$ is considered fixed and the scale $n$ runs down from $n=N$.


%

\subsection{Solving the first order}

It is instructive and useful 
to study the fixed point equation \eqref{eq: wn+1new} to first order in $w$. 
Define the map
\begin{align*}
 (\mathcal{L} w )(\phi):=\caS w(\phi+  \Upsilon_n\ast\xi_{n-1}).
 \end{align*}
Then  \eqref{eq: wn+1new} can be written as
\beq
w_{n-1}(\phi)=(\mathcal{L}w_n)(\phi+\Upsilon_{n}\ast w_{n-1}(\phi)) 
 \label{eq: vn+1new1}
\eeq
so $\caL$ is the {\it linearization} of the RG map  $\caR$:  $\caL= D\caR$.
 Its properties are crucial for understanding the flow of effective  equations $w_n$.
 
 Let us 
 consider the linear RG flow from scale $N$ to scale $n$ i.e. 
 $\caL^{N-n} w_{N}$. 
 This can be computed by doing one RG step with  $L$ replaced by $L^{N-n}$.  We get
 \begin{align}\label{eq:iter_linear_map}
 \caL^{N-n} w_{N}(\phi)= \caS^{N-n}w_N(\phi + Y^{(N)}_{n}\ast \xi_n)
\end{align}
where
\beq
Y^{(N)}_{n}(t,x)=\partial_xH_n(t,x)\chi_{N-n}(t).\label{eq: cayn}
\eeq
with 
\begin{align}\label{eq:heat_torus}
H_n(t,x)=\frac{1}{\sqrt{4\pi t}} \sum_{i \in \Z}e^{-\frac{(x+i L^{n})^2}{4t}}
\end{align}
being the heat kernel on $\bbT_n$ and
\begin{align}\label{eq:chiN}
\chi_{m}(s):=\chi(s)-\chi(L^{2m}s) 
\end{align}
a smooth indicator of the interval $ [L^{-2m},2]$. The field
$$\vartheta_n
:=Y^{(N)}_{n}\ast \xi_n
$$
 is a stationary Gaussian vector-valued field with covariance given by 
\beq
 \E \vartheta_{n, \alpha}(t,x) \vartheta_{n, \beta}(s,y)=\delta_{\alpha \beta}{\mathfrak{C}}^{(N)}_n(|t-s|, x-y) \label{eq: fracCn-ndef}
\eeq
where 
\beq
 {\mathfrak{C}}^{(N)}_n(t,x)=-\Delta \int_{0}^{\infty} H_n(t+2\tau,x)\chi_{N-n}(t+\tau)\chi_{N-n}(\tau)d\tau. \label{eq: fracCn-ndef1}
\eeq 

The scaling operator  has eigenfunctions
\begin{align}
\caS \phi^k&= 
L^{\frac{3-k}{2}} \phi^k.
\end{align}
From this one obtains
\beq \label{eq:linear_sol}
 \caL^{N-n} w_{N}(\phi)=
L^{-\frac{1}{2}n} (\phi+\vartheta_n,M(\phi+\vartheta_n))-L^{-\frac{3}{2}n}C_{L^{-N}}.
\eeq
We see now why the counter term $C_{L^{-N}}$ is needed: the expectation of the random field $(\vartheta_n,M\vartheta_n)$ blows up as $N\to\infty$ as shown in Lemma \ref{lemma: heatkernel} and this divergence is the source of the renormalization constant $ m_1$ in \eqref{eq: reps}.

Furthermore, we need to study the dependence of our constructions on the choise of the cutoff function $\chi$ in  \eqref{eq:cutoff}. To this end, let us define
\begin{align}
\chi'_{m}(s)= \chi(s)-\chi'(L^{2m}s)
\end{align}
where the lower cutoff in \eqref{eq:chiN} has been replaced by a different bump function $ \chi'$. 
In the following we will denote by $ Y_n'^{(N)}$ the kernel $ Y_n^{(N)}$ where $ \chi_{N-n}$ is replaced by $ \chi'_{N-n}$. We also note that, by taking $ \chi'(s)=\chi(L^2 s)$, one gets $ Y_n'^{(N)}=Y_n^{(N+1)}$, so by varying $ \chi'$ we can also study the dependence and convergence as $ N \to \infty$.   

We are now ready to state the Lemma which controls the dependence of the covariance $ \frC_n^{(N)}$ on $ N$ and $ \chi$. See the Appendix for the proof.

\begin{lemma}\label{lemma: heatkernel} Define $ m_1 \in \R^3$ by
\beq
m_1^{(\alpha)}:= \bigg(\sum_{\beta=1}^3 M^{(\alpha)}_{\beta \beta} \bigg)\frac{1}{2^{7/2}\sqrt{\pi}}\int_0^\infty s^{-3/2}(1-\chi(s)^2)ds \label{eq:m_def}
\eeq
for $ \alpha =1,2,3$.
Then 
\beq
\bbE (\vartheta^{(N)}_n,M \vartheta^{(N)}_n)=L^{N-n}m_1  +\delta_n^{(N)}
\non
\eeq
where $\|\delta_n^{(N)} \|$ is uniformly bounded in $ N$ and $n $ where $\|\cdot \|$ is the Euclidean norm in $\R^3$.  
Moreover, let $ \delta_n'^{(N)}$ be the analog of $ \delta_n^{(N)}$, where the lower cutoff function is replaced by $\chi'$. Then 
\beq
\|\delta_n^{(N)}-\delta_n'^{(N)} \|\leq Ce^{-cL^{2N}} 
\|\chi-\chi'\|_\infty.
\label{eq: deltaNnN'bound}
\eeq

\end{lemma}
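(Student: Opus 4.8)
The plan is to reduce the whole statement to the behaviour of the scalar kernel $\mathfrak{C}_n^{(N)}$ at the space--time origin. Since the covariance \eqref{eq: fracCn-ndef} is diagonal, one has $\E\big(\vartheta_n^{(N)},M^{(\alpha)}\vartheta_n^{(N)}\big)=\big(\sum_{\beta=1}^3 M^{(\alpha)}_{\beta\beta}\big)\,\mathfrak{C}_n^{(N)}(0,0)$, so it is enough to show
\beq
\mathfrak{C}_n^{(N)}(0,0)=L^{N-n}\kappa+\delta_0,\qquad \kappa:=\tfrac1{2^{7/2}\sqrt\pi}\!\int_0^\infty\! s^{-3/2}\bigl(1-\chi(s)\bigr)^2\,ds,
\eeq
with $\|\delta_0\|$ uniformly bounded in $N,n$ (so that $m_1^{(\alpha)}=\kappa\sum_\beta M^{(\alpha)}_{\beta\beta}$ reproduces \eqref{eq:m_def}), together with the corresponding statement when $\chi_{N-n}$ is replaced by $\chi'_{N-n}$. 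Using \eqref{eq: fracCn-ndef1} and moving $\Delta$ inside the $\tau$--integral (legitimate by smoothness and rapid decay of the periodised kernel), $\mathfrak{C}_n^{(N)}(0,0)=-\int_0^\infty\chi_{N-n}(\tau)^2\,(\Delta H_n)(2\tau,0)\,d\tau$, and the natural move is to split the sum \eqref{eq:heat_torus} into its $i=0$ term and the periodic images $i\neq0$.

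The $i=0$ term contributes $(\Delta H_n)(2\tau,0)\big|_{i=0}=-\tfrac1{4\sqrt\pi(2\tau)^{3/2}}$, so the corresponding part of $\mathfrak{C}_n^{(N)}(0,0)$ is $\tfrac1{2^{7/2}\sqrt\pi}\int_0^\infty\tau^{-3/2}\chi_{N-n}(\tau)^2\,d\tau$. Substituting $s=L^{2(N-n)}\tau$, the weight scales by $L^{3(N-n)}$ and the measure by $L^{-2(N-n)}$, while $\chi_{N-n}(\tau)$ becomes $\chi(L^{-2(N-n)}s)-\chi(s)$, which equals $1-\chi(s)$ for $s\le L^{2(N-n)}$ and is supported in $s\le 2L^{2(N-n)}$. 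Hence this part equals $L^{N-n}\kappa$ plus two $O(1)$ corrections: one from truncating $\int_0^\infty s^{-3/2}(1-\chi(s))^2\,ds$ at $s=L^{2(N-n)}$ (which is $O(L^{-(N-n)})$ once the prefactor $L^{N-n}$ is reinserted) and one from the rescaled transition window $s\in[L^{2(N-n)},2L^{2(N-n)}]$. This isolates the divergence $L^{N-n}m_1$; note that the resulting remainder is in fact $n$--independent up to the next step.

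For the periodic images, write the $i\neq0$ part of $(\Delta H_n)(2\tau,0)$ as $R_n(2\tau)$; it is a finite sum over $i\neq0$ of polynomially bounded prefactors times $e^{-i^2L^{2n}/(8\tau)}$. On the support $\tau\in[L^{-2(N-n)},2]$ of $\chi_{N-n}$ one has $L^{2n}/\tau\ge L^{2n}$, so the substitution $u=L^{2n}/\tau$ gives $\int\chi_{N-n}(\tau)^2|R_n(2\tau)|\,d\tau\le C\!\int_{L^{2n}/2}^\infty u^{-1/2}(\text{rapidly decaying})\,du\le CL^{-n}$, uniformly in $N$ and $\to0$ as $n\to\infty$; combined with the previous paragraph this proves the first claim. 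The same decomposition yields the $\chi$--dependence. Since $\chi_{N-n}(\tau)^2-\chi'^{(N)}_n(\tau)^2=\bigl(\chi'(L^{2(N-n)}\tau)-\chi(L^{2(N-n)}\tau)\bigr)\bigl(2-\chi(L^{2(N-n)}\tau)-\chi'(L^{2(N-n)}\tau)\bigr)$ is supported where $L^{2(N-n)}\tau\in\supp(\chi-\chi')\subseteq[1,2]$, the rescaling of the second paragraph shows that the $i=0$ contributions to $\mathfrak{C}_n^{(N)}(0,0)$ and to its analogue with $\chi'_{N-n}$ differ by \emph{exactly} $L^{N-n}(m_1-m_1')$, i.e. by the difference of the two subtracted leading terms, and therefore cancel in $\delta_n^{(N)}-\delta_n'^{(N)}$. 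What remains is $\delta_n^{(N)}-\delta_n'^{(N)}=-\bigl(\sum_\beta M^{(\alpha)}_{\beta\beta}\bigr)_\alpha\int\bigl(\chi_{N-n}(\tau)^2-\chi'^{(N)}_n(\tau)^2\bigr)R_n(2\tau)\,d\tau$, an integral over the window $\tau\in[L^{-2(N-n)},2L^{-2(N-n)}]$ on which $i^2L^{2n}/(8\tau)\ge L^{2N}/16$, whence $|R_n(2\tau)|\le CL^{aN}e^{-L^{2N}/16}$ for a fixed power $a$; multiplying by $\|\chi-\chi'\|_\infty$ and by the $O(1)$ length of the window gives $\|\delta_n^{(N)}-\delta_n'^{(N)}\|\le Ce^{-cL^{2N}}\|\chi-\chi'\|_\infty$.

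The step I expect to be the main obstacle is the uniformity bookkeeping in the rescaling argument: after substituting $s=L^{2(N-n)}\tau$ and subtracting $L^{N-n}m_1$ one must verify that the remainder is $O(1)$ \emph{simultaneously} in $N$ and $n$ --- in particular that the smooth transition of $\chi$ near $\tau\approx1$ contributes a bounded amount for every $n\le N$, and, for the difference estimate, that the Gaussian decay $e^{-L^{2N}/16}$ of the periodic images really beats the prefactors $L^{3(N-n)}$ that the rescaling dumps into the thin window where the two cutoffs disagree. Everything else reduces to standard estimates on the (periodised) heat kernel, proved in detail in the Appendix.
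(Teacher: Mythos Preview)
Your approach is essentially the same as the paper's: reduce to $\mathfrak{C}_n^{(N)}(0,0)$, split the periodised heat kernel into the $i=0$ term (which after the rescaling $s=L^{2(N-n)}\tau$ gives $L^{N-n}\kappa$ plus an $O(1)$ remainder) and the $i\neq 0$ images (which are $O(e^{-cL^{2n}})$), and for the cutoff comparison use that $\chi_{N-n}^2-\chi'^{(N)2}_n$ is supported on the window $\tau\in[L^{-2(N-n)},2L^{-2(N-n)}]$ where the images are $O(e^{-cL^{2N}})$, while the $i=0$ contributions cancel against the subtracted leading terms. One small remark: your $\kappa$ carries $(1-\chi(s))^2$ while the lemma statement has $1-\chi(s)^2$; a direct evaluation of $\int \tau^{-3/2}\chi_{N-n}(\tau)^2\,d\tau$ with $\chi_{N-n}(\tau)=\chi(\tau)-\chi(L^{2(N-n)}\tau)$ does produce your version, whereas the paper's appendix writes the $i=0$ integrand as $\chi(s)^2-\chi'(L^{2(N-n)}s)^2$ (which differs from $\chi_{N-n}(\tau)^2$ on the lower transition window) and recovers the form in the statement --- the two expressions differ only by a bounded, $n,N$--independent constant, so the content of the lemma is unaffected either way.
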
 

The counter term $ C_{L^{-N}} $ is then given in this linear approximation as
\beq 
C_{L^{-N}}= L^{N}m_1  \label{eq: C_linear}
\eeq 
and we end up with
\beq \label{eq:linear_sol2}
 \caL^{N-n} w_{N}(\phi)= u_{n,1}(\phi) .
\eeq
where
\beq \label{eq:linear_sol3}
u_{n,1}(\phi)= L^{-\frac{n}{2}}(
 (\phi+\vartheta_n,M(\phi+\vartheta_n))-L^{N-n}m_1) .
\eeq


\subsection{Higher order terms}

The heuristic idea of our proof is now the following. We look for the RG flow in the form
\begin{align}\label{eq:linear}
w_{n}=\sum_{i=1}^{k-1}u_{n,i}(\phi) +r_{n}
\end{align}
where $u_{n,i}$ are explicit perturbative contributions and in a suitable norm 
\begin{align}
\| u_{n,i}  \|  = \caO(L^{-\frac{i}{2}n}), \ \ \
\|r_{n}\|  =\caO(L^{-\frac{k}{2}n})
\end{align}
and we expect 
\beq 
r_{n-1}=\caL r_{n}+\caO(L^{-\frac{k+1}{2}n}).\label{reminder}
\eeq
Moreover, from our analysis of $\caL$ we also expect that
$$
\|\caL r_{n}\| \leq CL^{\frac{3}{2}}\|\caL r_{n}\|\leq  CL^{\frac{3}{2}}L^{-\frac{k}{2}n}=CL^{\frac{3}{2}-\frac{k}{2}}L^{-\frac{k}{2}(n-1)}
$$
so that \eqref{reminder} should iterate provided we take $k=4$. Hence, we should find the perturbative contributions to $w_{n}$ {\it up to order 3}. 
\begin{remark} The same heuristic idea works in general for subcritical problems. The dimensionless strength of the nonlinearity is $L^{-N\alpha}$ for some $\alpha>0$ and the norm of $\caL$ is $L^\beta$ for some $\beta>0$. Then one needs to do perturbation theory up to order $k-1$ with $k\alpha>\beta$.
\end{remark}
The $u_{n,i}$ may be computed by doing one RG step with scaling factor $L^{N-n}$
\beq
w_{n}(\phi)=L^{-\frac{n}{2}}(\phi+\vartheta_n+Y_{n}\ast w_{n}(\phi),M(\phi+\vartheta_n+Y_{n}\ast w_{n}(\phi)) )
 \label{eq: vn+1new1}
\eeq
where we dropped the superscript $ N$ also in $ Y_n^{(N)}$. 
We obtain
\begin{align*}
 u_{n,2}(\phi)& =2 L^{-n}(\phi+ \vartheta_{n},M(Y_n *u_{n,1}(\phi))
\end{align*}
and 
\begin{align*}
 u_{n,3}(\phi) =& L^{-\frac{3}{2}n}( (Y_n *u_{n,1}(\phi),M(Y_n *u_{n,1}(\phi)) \\\nonumber
 &+ 2 (\phi+ \vartheta_{n},M(Y_n *u_{n,2}(\phi))-m_2 \log L^N-m_3)
\end{align*}
where $ m_2$ and $m_3$ are constants to be determined.
To write the recursion \eqref{reminder} let us denote $w_n$ by $w$ and $w_{n-1}$ by $w'$ and similarly for the other functions. Then
\beq\label{reminder1}
r'(\phi)=\caL r(\phi+\Upsilon *w'(\phi))+\caF(r')(\phi)
\eeq
with 
\begin{align}
\caF(r')(\phi)=& u'_{1}(\phi+\Upsilon *w')-u'_1(\phi) - Du'_{1}(\phi)(\Upsilon *(u'_{1}+u'_{2}))-\hf D^2u'_{1}(\phi)(\Upsilon *u'_{1},\Upsilon *u'_{1}))\non\\
&+\caL u_{2}(\phi+\Upsilon *w')-\caL u_{2}(\phi)-D\caL u_{2}(\phi)\Upsilon *u'_1\non\\&+\caL u_{3}(\phi+\Upsilon *w')-\caL u_{3}(\phi)\\
\equiv & \caF_1(r')(\phi)+\caF_2(r')(\phi)+\caF_3(r')(\phi)\label{reminder2}
\end{align}
where $D$  is the (Frechet) derivative and on the LHS $w',u'$ are evaluated at $\phi$. \begin{remark} Note that   $u_i$ are  polynomials in $\phi$ so there is no problem in defining the derivative. In Section \ref{se:Banach_setting} we'll see that  $w$ is actually analytic.
\end{remark}

\section{Random fields} The perturbative terms $u_i$ are polynomials in $\phi$ with random coefficients. For the heuristic idea of the proof presented above to work these coefficients should not be too large. For $u_{n,1}$ these random coefficients are the random fields $\vartheta_n(t,x)$ and
\beq
u_{n,1}(0)=L^{-\frac{n}{2}}((\vartheta_n(t,x), M \vartheta_n(t,x))-L^{N-n}m_1 )
 \label{eq: noisefields0}
\eeq
 
In case  of $u_{n,2}$ and $u_{n,3}$ we don't need to consider all the coefficients. Indeed,
the discussion of previous section was based on a bound $L^{\frac{3}{2}}$ for the linearized RG operator. This is indeed its eigenvalue on constants. The next eigenvalue is $L$ on linear functions, the one after $L^\hf$ etc. Thus for $u_{n,2}$ we should be worried only about the constant and linear terms in $\phi$ and for $u_{n,3}$ only about constants. All the other terms should be irrelevant i.e. they should contract under the RG. We will now isolate these relevant terms. Let us expand
 \beq
u_{n,2}(\phi)=u_{n,2}(0)+Du_{n,2}(0)\phi+U_{n,2}(\phi) .
\label{eq:U_2}
\eeq
We get 
\beq
u_{n,2}(0)=2L^{-n}(\vartheta_n,M Y_n*u_{n,1}(0))
 \label{eq:U_n(0) }
\eeq
and 
\beq
Du_{n,2}(0)\phi=L^{-n}(\phi,MY_n*u_{n,1}(0))+L^{-n}\int_{-\infty}^t ds \int_{\T_n} dy \, \sigma_n(t,x,s,y)\phi(s,y)
 \label{eq:DU_n(0) }
\eeq
where 
\beq
(\sigma_n(t,x,s,y))_{\alpha \beta}=
4Y_n(t-s,x-y)\sum_{\gamma, \delta, \lambda} 
\vartheta_\gamma(t,x) M^{(\alpha)}_{\gamma \delta} \vartheta_{\lambda}(s,y))M^{(\delta)}_{\lambda \beta}\,.
 \label{eq: sigma_n}
\eeq
For the third order term we get
\beq
u_{n,3}(\phi)=u_{n,3}(0)+U_{n,3}(\phi) .
\label{eq:U_3}
\eeq
with
\begin{align}
 u_{n,3}(0) =&  L^{-\frac{3}{2} n}[(Y_n *u_{n,1}(0),M(Y_n *u_{n,1}(0)) \\\nonumber
 & + (\vartheta_n ,M(Y_n *u_{n,2}(0))-m_2 \log L^N - m_3].
\end{align}
Consider now the random fields $u_{n,i}(0), Du_{n,2}(0)$ with the scaling factor divided out, i.e. 
\begin{align}\label{eq: noisefields1}
\frz_{n,i}:= L^{\frac{i}{2}n}u_{n,i}(0), \ \ \ D \frz_{n,2}:= L^n Du_{n,2}(0). 
\end{align} 
Then $ \vartheta_n, \frz_{n,i}, D \frz_{n,2}$ belong to the Wiener chaos of white noise of bounded order ($\leq 4$) and their size and regularity are controlled by studying their covariances, as shown in the Section \ref{proof_prop}. 
For finite cutoff parameter $N$ these noise fields are a.s. smooth but in the limit $N\to\infty$ they become distribution valued. We estimate their size in suitable (negative index) Sobolev type norms which we now define.

The operator $(-\partial_t^2+1)^{-1}$ acts  on $L^2(\bbR)$ by convolution with the function\beq
 K_1(t)=\hf e^{-|t|}\,.
 \label{eq: K_1def}
\eeq
 and the operator $(-\Delta+1)^{-1}$ on $L^2(\bbT_n)$ is convolution with the periodization of \eqref{eq: K_1def} 
$$ K_2(x)= \sum_{i\in\Z}K_1(x+iL^n).
$$ 
Let
$$
K(t,x)=K_1(t)K_2(x).
$$
Note that convolution with $K$ is a positive operator in $L^2(\bbR\times\bbT_n)$.
We define $\caV_n$ to be be the completion of $C_0^\infty(\bbR_+\times\bbT_n)$ with the norm
\beq
\|v\|_{\caV_n}=\sup_{i}
\|K\ast v\|_
{L^2(\mathfrak{c}_i)}
\eeq
where $\mathfrak{c}_i$ is the unit cube centered at $i\in \bbZ\times(\bbZ\cap\bbT_n)$.
To deal with the bi-local field as $\sigma_n$ in \eqref{eq: sigma_n} we define for $\sigma(t,x,s,y)$ in $C_0^\infty(\bbR_+\times\bbT_n\times \bbR_+\times\bbT_n)$ 
\beq
\|\sigma\|_{\caV_n}=
\sup_{i}\sum_j\|K\otimes K\ast \sigma\|_{L^2(\mathfrak{c}_i\times \mathfrak{c}_j)}
\eeq

Now we can specify the admissible set of noise. Let $\ga>0$  and  define the sets of events $\caE_m$, $m>0$  in the probability space of the space time white noise $\Xi$
as follows. Let $\zeta^{(N)}_n$ denote any fields  $\vartheta_n, \frz_{n,i}, D\frz_{n,2}$.  The first condition on $\caE_m$ is that for all $N\geq n\geq m$
the following hold: 
\beq
\|h_{n}\zeta^{(N)}_n\|_{\caV_n}\leq L^{\ga n}\,.
\label{eq: amevent1}
\eeq
where $h_{n}$ is a smooth indicator of the time interval $[0,\tau_{n}]$, $\tau_{n}=L^{2(n-m)}$ which is introduced to localize in time the flow equation, as we will see in Section \ref{se:Banach_setting}. More precisely, $h$ is a smooth bump on $\bbR$ with $h(t)=1$ for $t\leq -L^{-2}$ and $h(t)=0$ for $t\geq-\hf L^{-2}$ and set 
$h_{k}(t)=h(t-\tau_k)$ so that  $h_{k}(t)=1$ for $t\leq\tau_{k}-L^{-2}$ and $h_{k}(t)=0$ for $t\geq\tau_{k}-\hf L^{-2}$. 

We need also to control the
$N$ and $\chi$ dependence of the  noise fields $\zeta^{(N)}_n$. We can study both by varying the lower cutoff in the operator $Y^{(N)}_{n}$ in  \eqref{eq: cayn}. We denote by  $\zeta'^{(N)}_n$ any of the resulting noise fields.  Our second condition on  $\caE_m$ is that for all $N\geq n\geq m$  and all cutoff functions $\chi, \chi'$ with bounded $C^1$ norm
\beq
\|h_{n}(\zeta'^{(N)}_n-\zeta^{(N)}_n)\|_{\caV_n}\ \leq  L^{-\ga(N-n)} L^{\ga n}. 
\label{eq: amevent2}
\eeq 
The final condition concerns the fields $\Upsilon_n * \xi_{n-1}$ entering the RG iteration \eqref{eq: wn+1new}. Note that these fields are $N$ independent and smooth and we are going to impose on them a smoothness condition:  for all $n> m$ we demand
\beq
 \| \Upsilon_n * \xi_{n-1}\|_{\Phi_{n-1}}\leq L^{\ga n}.
 \label{eq: Gammaxibound}
\eeq 
where the norm is defined  in next section. In Section \ref{proof_prop} we prove

\begin{proposition}\label{prop: mainproba} There exists $\ga>0$ such that almost surely $\caE_m$ holds for some $m<\infty$.
\end{proposition}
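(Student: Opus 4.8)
The plan is to reduce Proposition \ref{prop: mainproba} to a Borel--Cantelli style argument: it suffices to show that $\sum_m \Prob(\caE_m^c)<\infty$, since then a.s. only finitely many $\caE_m^c$ occur, hence $\caE_m$ holds for all large $m$ and in particular for some $m<\infty$. By a union bound over the three defining conditions \eqref{eq: amevent1}, \eqref{eq: amevent2}, \eqref{eq: Gammaxibound} and over the (countably many) pairs $(N,n)$ with $N\geq n\geq m$, it is enough to control, for each such pair, the probability that one of the stated norm bounds fails. Since the relevant random fields $\zeta^{(N)}_n$ lie in a fixed finite Wiener chaos (order $\leq 4$), and similarly for $\Upsilon_n*\xi_{n-1}$, all their moments are controlled by powers of their $L^2$ covariances; the covariance estimates are exactly what is supplied by the analysis in Section \ref{proof_prop} (and Lemma \ref{lemma: heatkernel} for the subtraction of the divergence in $\vartheta_n$).

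First I would record the hypercontractivity / Wiener chaos estimate: for a random variable $X$ in the $p$-th chaos, $\E|X|^{q}\leq C_{p,q}(\E|X|^2)^{q/2}$ for any $q$, so high moments are governed by the second moment. Applied to the (smeared against $K$, localized by $h_n$) fields appearing in \eqref{eq: amevent1}, this converts the covariance bounds of Section \ref{proof_prop} into bounds of the form $\E\|h_n\zeta^{(N)}_n\|_{\caV_n}^{q}\leq C_q L^{q\ga' n}$ for some $\ga'<\ga$ (here I use that $\caV_n$ is an $\ell^\infty$-type norm over unit cubes: bounding the sup by a large-$q$ $\ell^q$ sum, the number of cubes in the support grows only polynomially in $L^n\cdot\tau_n$, which is a fixed power of $L^n$). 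Chebyshev at a sufficiently large but fixed $q$ then gives
\beq
\Prob\big(\|h_n\zeta^{(N)}_n\|_{\caV_n}> L^{\ga n}\big)\leq C_q L^{-q(\ga-\ga')n},
\non
\eeq
and summing over $N\geq n\geq m$ (the $N$-dependence is uniform by Lemma \ref{lemma: heatkernel} and the covariance estimates, which are uniform in $N$) yields a bound $\leq C_q' L^{-q(\ga-\ga')m}$ provided $q(\ga-\ga')>$ the exponent gained from the double sum over $(N,n)$. Choosing $\ga>0$ small and $q$ large handles \eqref{eq: amevent1}. The condition \eqref{eq: amevent2} is treated identically, using instead the difference covariance bound \eqref{eq: deltaNnN'bound}-type estimates from Section \ref{proof_prop}, which carry the extra decay factor $L^{-\ga(N-n)}$ already built into the right-hand side; hypercontractivity plus Chebyshev again give a summable tail. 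Condition \eqref{eq: Gammaxibound} is the easiest: $\Upsilon_n*\xi_{n-1}$ is a first-chaos Gaussian field with smooth, rapidly decaying covariance (as noted after \eqref{eq:neweq4 }, $\Ga_n$ is infinitely smoothing with finite time range and Gaussian spatial decay), so its $\Phi_{n-1}$-norm has Gaussian tails and the failure probability is exponentially small, hence trivially summable.

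Assembling: $\Prob(\caE_m^c)\leq \sum_{\text{3 conditions}}\sum_{N\geq n\geq m}(\cdots)\leq C L^{-\delta m}$ for some $\delta>0$ once $\ga$ is chosen small enough and the moment orders $q$ large enough, so $\sum_m\Prob(\caE_m^c)<\infty$ and Borel--Cantelli finishes the proof. The main obstacle is the second point above: obtaining the covariance estimates for the chaos-valued fields $\frz_{n,i}$, $D\frz_{n,2}$ in the $\caV_n$ norms \emph{uniformly in $N$} (and with the claimed $L^{-\ga(N-n)}$ gain under a change of lower cutoff). This requires the detailed heat-kernel and convolution bounds of Sections \ref{proof_prop} and the Appendix — in particular controlling the kernels $Y^{(N)}_n$, $\sigma_n$, and the iterated convolutions with $K$ — and isolating the logarithmically divergent piece producing $m_2\log L^N$ and the finite $m_3$ so that what remains is genuinely $O(1)$ in $N$. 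Everything else (hypercontractivity, Chebyshev, the geometric sum over scales, Borel--Cantelli) is soft and standard once those analytic inputs are in hand.
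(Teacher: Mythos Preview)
Your approach is essentially the paper's: covariance bounds on the smeared chaos fields (Proposition~\ref{pro: proba1}), hypercontractivity to reach high moments, Chebyshev for tails, union bound over scales, and Borel--Cantelli to conclude $\Prob(\caE_m^c)\leq CL^{-\delta m}$.

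There is one point where your sketch would fail as written. For condition~\eqref{eq: amevent1} you obtain a tail bound $\Prob(\|h_n\zeta^{(N)}_n\|_{\caV_n}>L^{\ga n})\leq C_q L^{-q(\ga-\ga')n}$ that is \emph{uniform in~$N$}, and then you propose to sum over all $N\geq n\geq m$. But uniformity in~$N$ means the bound carries no decay in~$N$, so the inner sum $\sum_{N\geq n}$ diverges; no choice of~$q$ repairs this. The fix is not to union-bound condition~\eqref{eq: amevent1} over~$N$ at all: instead control $\sup_{N\geq n}\|h_n\zeta^{(N)}_n\|_{\caV_n}$ by $\|h_n\zeta^{(n)}_n\|_{\caV_n}+\sum_{k\geq n}\|h_n(\zeta^{(k+1)}_n-\zeta^{(k)}_n)\|_{\caV_n}$, where each increment is an instance of condition~\eqref{eq: amevent2} (take $\chi'(s)=\chi(L^2 s)$ so that $\zeta'^{(k)}_n=\zeta^{(k+1)}_n$). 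The $L^{-\ga(k-n)}$ decay makes this telescoping sum geometric, and the union bound over~$N$ is then only needed for condition~\eqref{eq: amevent2}, where the same decay makes it summable. The paper is equally terse on this point, but this is what underlies its claim that the bounds \eqref{eq:bound_local}--\eqref{eq: gaxiest} yield $\Prob(\caE_m^c)\leq CL^{-2m}$.
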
 
In the following sections we suppose the noise is on  $\caE_m$ and  we will  control the RG iteration \eqref{reminder1} for scales $n\geq m$.

\section{Banach space setup for the RG map}\label{se:Banach_setting}

In this section we set up the RG iteration in suitable functional spaces along the same lines of \cite{bk, bgk}. Let us first discuss the domain and range of the effective nonlinearities  $w_n$. The range of  $w_n, r_n$ is dictated by the noise, so we take it to be  $\caV_n$.

In the argument  of $w^{}_n$  in \eqref{eq: wn+1new}   $\Upsilon_{n }*(w_{n-1}+\xi_{n-1})$ is smooth so we take the domain  of  $w_n(\phi)$  to consist of suitably smooth functions.   
Let $\Phi_n$  be the space of 
$$\phi:[0,\tau_{n}]\times\bbT_n\to\bbC$$
 which are $C^2$
 in
$t$ and $C^2$ in $x$ with $ \partial^i_t\phi(0,x)=0$ for $0\leq i\leq 2$ 
and all $x\in \bbT_n$.
We equip $ \Phi_n$
 with
the sup norm 
$$\|\phi\|_{\Phi_n}:=\sum_{i\leq 2, j\leq 2}\|\partial_t^i\partial_x^j\phi\|_\infty.
$$ 
The following lemma
collects some elementary facts on how our spaces tie up with the operators entering the RG:

\begin{lemma}\label{lem: gammamap}  
(a) $ \Upsilon_n 
: \caV_{n-1}\to \Phi_{n-1}$ and $h_{n-1}\Upsilon_n:\caV_{n-1}\to\caV_{n-1}$
are bounded operator with norm $C(L)$.
\vskip 2mm
\noindent(b) $\frs: \Phi_{n-1}\to \Phi_n$ and $\frs^{-1}: \caV_{n}\to \caV_{n-1}$ are bounded with
$$ 
\|\frs\|\leq L^{-\hf},\ \ \ \|\frs^{-1}\|\leq CL^{\hf}.$$ 
\vskip 2mm
\noindent(c) Let $\phi \in C^{2,2}(\R \times \T_n)$ and  $v\in\caV_n$. Then $\phi v\in\caV_n$ and $\|\phi v\|_{\caV_n}\leq C\|\phi\|_{C^{2,2}}\|v\|_{\caV_n}$. 

\end{lemma}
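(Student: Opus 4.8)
## Proof plan for Lemma \ref{lem: gammamap}

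This is a collection of three elementary boundedness statements about the operators $\Upsilon_n$, $\frs$, and pointwise multiplication, all relative to the norms $\|\cdot\|_{\caV_n}$ and $\|\cdot\|_{\Phi_n}$. Let me think about each.

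**Part (a).** $\Upsilon_n = \partial_x \Gamma_n$ where $\Gamma_n(t,x) = e^{t\Delta}(x,0)(\chi(t)-\chi(L^2 t))$ on $\bbT_n$. Key point: the time-support of $\Gamma_n$ is contained in $[L^{-2}, 2]$, so it's a *finite-range* smoothing operator. For the first claim, $\Upsilon_n: \caV_{n-1}\to\Phi_{n-1}$: given $v\in\caV_{n-1}$, write $v = (-\Delta+1)(-\partial_t^2+1) K*v$ where $K*v\in L^2_{\mathrm{loc}}$ with controlled local norms (that's exactly what $\|v\|_{\caV_{n-1}}$ bounds). Then $\Upsilon_n * v = [(-\Delta+1)(-\partial_t^2+1)\Upsilon_n] * (K*v)$. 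Since $\Upsilon_n$ and all its derivatives are Schwartz-class in $x$ with Gaussian decay and smooth compactly-supported-in-time, the kernel $(-\Delta+1)(-\partial_t^2+1)\Upsilon_n$ is again an $L^1$ kernel with fast decay; convolving an $L^2_{\mathrm{loc}}$ function (with uniformly bounded local $L^2$ norms) against such a kernel produces a bounded, $C^2$-in-$t$, $C^2$-in-$x$ function, with sup norm of all derivatives $\leq C(L)\|v\|_{\caV_{n-1}}$. The vanishing of $\partial_t^i\phi$ at $t=0$ follows because $\Gamma_n$ is supported in $t\geq L^{-2}$ and $v$ in $t\geq 0$, so $\Upsilon_n*v$ vanishes for $t< L^{-2}$; actually I should restrict to $[0,\tau_{n-1}]$ — this is fine since $v$ is supported in $t\geq 0$ and the finite time-range of $\Upsilon_n$ keeps us from needing information beyond $\tau_{n-1}$ plus a bounded amount (a minor technical point to address, using that $\tau_{n-1} = L^{2(n-1-m)}$ and the cube decomposition). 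For the second claim, $h_{n-1}\Upsilon_n:\caV_{n-1}\to\caV_{n-1}$: estimate $\|K*(h_{n-1}\Upsilon_n*v)\|_{L^2(\mathfrak c_i)}$ by writing $K*(h_{n-1}\Upsilon_n*v) = h_{n-1}(K*\Upsilon_n*v) + [\text{commutator}]$; since $K*\Upsilon_n$ is a nice $L^1$ kernel with fast decay and $v$ has uniformly bounded local $L^2$ mass, and the commutator of multiplication by the slowly-varying bump $h_{n-1}$ with convolution against a fast-decaying kernel is small, one gets the bound $C(L)\|v\|_{\caV_{n-1}}$. The constant depends on $L$ through the time-support $[L^{-2},2]$ but not on $n$ or $N$.

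**Part (b).** Pure scaling. With $(\frs\phi)(t,x) = L^{-1/2}\phi(L^{-2}t, L^{-1}x)$: for $\phi\in\Phi_{n-1}$, the function $\frs\phi$ lives on $[0,\tau_n]\times\bbT_n$ (note $L^2\tau_{n-1}=\tau_n$ and $L\bbT_{n-1}=\bbT_n$), is $C^{2,2}$, vanishes to order $2$ in $t$ at $t=0$, and each derivative picks up at worst a favorable power of $L$: $\|\partial_t^i\partial_x^j(\frs\phi)\|_\infty = L^{-1/2-2i-j}\|\partial_t^i\partial_x^j\phi\|_\infty \leq L^{-1/2}\|\phi\|_{\Phi_{n-1}}$, giving $\|\frs\|\leq L^{-1/2}$. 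For $\frs^{-1}$ acting on $\caV_n\to\caV_{n-1}$: $(\frs^{-1}v)(t,x)=L^{1/2}v(L^2 t, Lx)$, and one must track how the $\caV$-norm (defined via $K*$ on unit cubes) transforms under dilation by $L$. Here $K=K_1\otimes K_2$ with $K_1(t)=\tfrac12 e^{-|t|}$: under the rescaling, $K$ gets replaced by a rescaled kernel, but since $L>1$ the rescaled exponential kernel is dominated (up to a constant depending on $L$) by $K$ itself convolved with a bounded-mass correction — more precisely one compares $\|K*\frs^{-1}v\|_{L^2(\mathfrak c_i)}$ to $L^{1/2}\cdot L^{-\text{(Jacobian)}/2}\|K_{\text{resc}}*v\|_{L^2(L\mathfrak c_i)}$ and absorbs the discrepancy between $K_{\text{resc}}$ and $K$ into $C(L)$, noting $L\mathfrak c_i$ is covered by $O(L^{\dim})=O(L^2)$ unit cubes. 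Collecting powers gives $\|\frs^{-1}\|\leq CL^{1/2}$. This is bookkeeping; the only subtlety is that $K$ is not scale-invariant, handled by a crude domination valid because $L$ is a fixed constant.

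**Part (c).** Multiplier bound: $\|\phi v\|_{\caV_n}\leq C\|\phi\|_{C^{2,2}}\|v\|_{\caV_n}$. Write $\|K*(\phi v)\|_{L^2(\mathfrak c_i)}$ and compare with $\|\phi(K*v)\|_{L^2}$ plus a commutator $[\,K*,\phi\,]v$. For the first term, $\|\phi\cdot(K*v)\|_{L^2(\mathfrak c_i)}\leq \|\phi\|_\infty\|K*v\|_{L^2(\mathfrak c_i)}\leq \|\phi\|_{C^{2,2}}\|v\|_{\caV_n}$. For the commutator: $(K*(\phi v) - \phi(K*v))(z) = \int K(z-z')(\phi(z')-\phi(z))v(z')\,dz'$. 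Since $K$ has exponential decay, split the $z'$-integral over dyadic annuli around $z$; on the annulus at distance $\sim R$ one has $|\phi(z')-\phi(z)|\leq \|D\phi\|_\infty R$ (or use the full $C^2$ bound and the second-order Taylor remainder when convenient), while $\int_{|z-z'|\sim R}|K(z-z')|\,|v(z')|\,dz'$ is controlled by $e^{-cR}\times(\text{number of cubes}\sim R^2)\times\sup_j\|v\|_{L^2(\mathfrak c_j)}$; the extra factor $R$ is killed by the exponential, and summing over $R$ gives $C\|\phi\|_{C^{1}}\|v\|_{\caV_n}$. Hence the claim with $C\|\phi\|_{C^{2,2}}$ (the $C^{2,2}$ rather than $C^1$ is just a convenient over-estimate, and matters if one wants the same estimate to handle multiplication by things like cutoffs $h_n$ or test functions with control on two derivatives).

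**Main obstacle.** None of the three parts is deep; the whole lemma is "elementary facts," as the text says. The one place requiring genuine (if routine) care is making the constants *uniform in $n$ and $N$* — in particular in parts (a) and (b), where the domains $[0,\tau_n]\times\bbT_n$ grow with $n$. The resolution is uniform: the relevant kernels ($\Upsilon_n$, $K$, and their derivatives) have $n$-independent finite range in time and $n$-independent Gaussian/exponential decay in space once $n$ is large enough that $L^n$ exceeds the kernel's effective width, so all the cube-by-cube estimates are translation-uniform and the periodization in $x$ contributes only negligible exponentially-small corrections. I would state this uniformity explicitly once at the start and then carry out each part as above, since it is the only thing that could silently fail.
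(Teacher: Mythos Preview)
Your proposal is correct and is the natural direct argument for each part. The paper itself does not give a proof but simply refers to Lemma~9 in \cite{AK}, where the analogous statements (for the $\phi^4_3$ setup) are proved by exactly the mechanisms you describe: transferring the differential operator $(-\Delta+1)(-\partial_t^2+1)$ from $v$ onto the smooth compactly-supported-in-time kernel $\Upsilon_n$ in (a), explicit scaling bookkeeping in (b), and a commutator estimate exploiting the exponential decay of $K$ in (c). Your remark on the $n,N$-uniformity of the constants is the right point to flag and your resolution is the standard one.
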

\noindent{\it Proof}. Essentially the same as Lemma 9 in \cite{AK}.
\qed

\vskip 2mm

Consider now our fixed point problem
\beq
w_{n-1}(\phi)=\caS w_n(\phi+\Upsilon_{n} * \xi_{n-1}+\Upsilon_{n}*w_{n-1}(\phi)).
 \label{eq:wn-1}
\eeq
$w_{n}$ takes values in the distribution space $\caV_{n}\subset \caD'(\R_+\times\T_n)$. We want to bound it on the time interval $[0,\tau_{n}]$ i.e. we need to localize \eqref{eq:wn-1} in time. Define 
$$\tilde w_n=h_{n}w_n$$
 so that  
\beq
\tilde w_{n-1}(\phi)=h_{n-1}\caS w_n(\phi+\Upsilon_{n} * \xi_{n-1}+\Upsilon_{n} * w_{n-1}(\phi))
 \label{eq: vn+1new1tildeee}
\eeq
One can readily check that  $\Upsilon_n w_{n-1}= \Upsilon_n \tilde w_{n-1}$ on the time interval $[0,\tau_{n-1}]$ and that 
$$h_{n-1}\caS w_n=h_{n-1}\caS \tilde w_n.$$
Thus \eqref{eq: vn+1new1tildeee} can be written as
\beq
\tilde w_{n-1}(\phi)=h_{n-1}\caS \tilde w_n(\phi+\Upsilon_{n} * \xi_{n-1}+\Upsilon_{n}*\tilde w_{n-1}(\phi)).
 \label{eq: vn+1new1tildeee1}
\eeq
We will solve \eqref{eq: vn+1new1tildeee1} in a space of analytic functions which we discuss next. Let $\caH,\caH'$ be Banach spaces and $B(r)\subset \caH$ open ball of radius $r$. Let $H^\infty(B(r),\caH')$ denotes the space of analytic functions $f:B(r)\to \caH'$ with sup norm which we denote by  $|||\cdot|||_{B(r)}$. We will use the following simple facts that are  identical 
to those of analytic functions on finite dimensional
spaces (see \cite{cha}). 

\vskip 0.2cm

\no{\bf (a)}. Let $w\in  H^\infty(B(r),\caH')$ and  $w'\in  H^\infty(B(r'),\caH'')$. If $|||w
|||< r' $
then $w'\circ w\in H^\infty(B(r),\caH'')$ and
\beq
|||w'\circ w|||_{B(r)}\,\leq\, |||w'|||_{B(r')}.
\label{circ}
\eeq
\vskip 0.2cm

\no{\bf (b)}. Let $w\in  H^\infty(B(r),\caH')$ and $\rho<r$. Then
\beq
\sup_{\Vert x\Vert< \rho } \Vert Dw(x)\Vert_{_{\caL(\caH,\caH')}}
\leq (r-\rho)^{-1} |||w|||_{B(r)},
\label{circ0}
\eeq
where ${\caL(\caH,\caH')}$ denotes the space of bounded linear operators  
from $\caH$ to $\caH'$. Taking $\rho=\hf r'$, we infer that if $|||w_i|||_{B(r)}
\leq \hf \rho$ then
\beq
|||w'\circ w_1-w'\circ w_2|||_{B(r)}\,\leq\,{\frac{_2}{^{r'}}}
\,|||w'|||_{B(r')}\;|||w_1-w_2|||_{B(r)}.
\label{circ1}
\eeq

\no{\bf (c)}. 
Define $\,\delta_k w(x) := w(x)
-\sum\limits_{\ell=0}^{k-1} \frac{1}{\ell !}\, D^\ell w(0)(x)$. 
Then
\beq
|||\delta_k 
w|||_{B(ar)}\leq\, 
{\frac{_{a^k}}{^{1-a}}} 
 |||w|||_{B(r)}
\label{circ2}
\eeq
for $0\leq a < 1$.

\vskip 2mm
Furthermore, we infer this important corollary from Lemma \ref{lem: gammamap}:


\begin{proposition}\label{prop: linrgnorm} 
$\caS$ maps $H^\infty(B(R),\caV_n )$ into $H^\infty(B(L^{\hf}R), \caV_{n-1})$ 
with norm $\|\caS\|\leq CL^{\frac{3}{2}}$. Here $B(R)\subset \Phi_n$ and $B(L^{\hf}R)\subset \Phi_{n-1}$ respectively.
\end{proposition}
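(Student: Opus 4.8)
The proof is essentially a bookkeeping exercise combining Lemma~\ref{lem: gammamap}(b) with the elementary facts \eqref{circ}--\eqref{circ2} on analytic maps between Banach spaces. The plan is to write the scaling operator as the composition
\[
\caS w = (L\,\frs^{-1})\circ w\circ \frs ,
\]
where $\frs:\Phi_{n-1}\to\Phi_n$ and $\frs^{-1}:\caV_n\to\caV_{n-1}$ are the bounded linear operators furnished by Lemma~\ref{lem: gammamap}(b), and then to read off analyticity and the norm bound scale by scale.

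First I would check that $\frs$ maps the ball $B(L^{\hf}R)\subset\Phi_{n-1}$ into $B(R)\subset\Phi_n$: by Lemma~\ref{lem: gammamap}(b), $\|\frs\phi\|_{\Phi_n}\leq L^{-\hf}\|\phi\|_{\Phi_{n-1}}<L^{-\hf}\cdot L^{\hf}R=R$ whenever $\|\phi\|_{\Phi_{n-1}}<L^{\hf}R$. Since $\frs$ is linear it is in particular analytic, so \eqref{circ} applied with inner map $\frs$ shows that $w\circ\frs$ lies in $H^\infty(B(L^{\hf}R),\caV_n)$ with $|||\,w\circ\frs\,|||_{B(L^{\hf}R)}\leq|||w|||_{B(R)}$. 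Post-composing with the bounded linear (hence analytic) map $L\,\frs^{-1}:\caV_n\to\caV_{n-1}$ again preserves analyticity, giving $\caS w\in H^\infty(B(L^{\hf}R),\caV_{n-1})$; for the operator norm, for $\phi\in B(L^{\hf}R)$ one has, using $\|\frs^{-1}\|\leq CL^{\hf}$,
\[
\|(\caS w)(\phi)\|_{\caV_{n-1}}=L\,\|\frs^{-1}w(\frs\phi)\|_{\caV_{n-1}}\leq L\cdot CL^{\hf}\,\|w(\frs\phi)\|_{\caV_n}\leq CL^{\frac{3}{2}}\,|||w|||_{B(R)} ,
\]
which is precisely $\|\caS\|\leq CL^{3/2}$.

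The only point requiring a little care — and it is entirely absorbed into Lemma~\ref{lem: gammamap}(b), itself essentially Lemma~9 of \cite{AK} — is that the domains and ranges sit over different space–time cylinders: $\phi\in\Phi_{n-1}$ is defined on $[0,\tau_{n-1}]\times\bbT_{n-1}$ whereas $\frs\phi\in\Phi_n$ lives on $[0,\tau_n]\times\bbT_n$ with $\tau_n=L^2\tau_{n-1}$ and $\bbT_n=L\,\bbT_{n-1}$, so that $w(\frs\phi)\in\caV_n$ must be pulled back to $\caV_{n-1}$ by $\frs^{-1}$. The compatibility of the $\caV$-norms under this rescaling — where the supremum over unit cubes and the fixed convolution kernel $K$ enter — is exactly what makes $\frs^{-1}:\caV_n\to\caV_{n-1}$ bounded with norm $\leq CL^{\hf}$. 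Beyond verifying this I expect no genuine obstacle; the statement is a soft consequence of the boundedness of $\frs^{\pm1}$ on the respective scale spaces together with the stability of analyticity under pre- and post-composition with bounded linear maps.
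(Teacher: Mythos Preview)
Your argument is correct and is exactly what the paper intends: the proposition is stated as an immediate corollary of Lemma~\ref{lem: gammamap}(b), and your decomposition $\caS w=(L\,\frs^{-1})\circ w\circ\frs$ together with the bounds $\|\frs\|\leq L^{-1/2}$ and $\|\frs^{-1}\|\leq CL^{1/2}$ is precisely how one unpacks that corollary.
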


\vskip 0.2cm

Let now $\gamma>0$ and set $B_n=B(L^{2\ga n})\subset\Phi_n $. Then we have

\begin{proposition}\label{prop: solution of fp}  There exist $L_0>0$, $\ga_0>0$ so that for $L > L_0$, $\ga<\ga_0$  and $m>m(\ga,L) $
 if 
 $\ \Xi\in \mathcal{A}_m\ $
  then  then for all $N\geq n-1\geq m$ 
 the equation  \eqref{eq: vn+1new1tildeee1} has a  unique solution $\tilde w^{(N)}_{n-1}\in H^\infty(B_{n-1},\caV_{n-1})$. 
 These solutions satisfy
 \beq
 |||\tilde w^{(N)}_{n}|||_{B_n}\leq L^{-\frac{1}{4} n}\label{eq: solball}
\eeq
and $\tilde w_n^{(N)}$ converge in $H^\infty(B_{n},\caV_{n})$ to a limit $\tilde w_n$ as $N\to\infty$. Furthermore, $\tilde w_n$ is independent on the small scale cutoff.
\end{proposition}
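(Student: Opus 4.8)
The plan is to solve the fixed point equation \eqref{eq: vn+1new1tildeee1} by a contraction argument in the scale-indexed family of Banach spaces $H^\infty(B_{n-1},\caV_{n-1})$, working by downward induction on $n$ from $n=N$ and using the three ingredients already at hand: the smoothing/contraction estimate for $\caS$ from Proposition \ref{prop: linrgnorm}, the composition and derivative estimates \eqref{circ}--\eqref{circ2} for analytic maps, and the almost sure bounds \eqref{eq: amevent1}--\eqref{eq: Gammaxibound} on the random fields valid on $\caE_m$. First I would set up the iteration: assuming $\tilde w_n^{(N)}$ has been constructed with $|||\tilde w_n^{(N)}|||_{B_n}\leq L^{-n/4}$, one views the right-hand side of \eqref{eq: vn+1new1tildeee1} as a map $\Psi$ on a small ball in $H^\infty(B_{n-1},\caV_{n-1})$, sending $\tilde w_{n-1}\mapsto h_{n-1}\caS\tilde w_n(\,\cdot+\Upsilon_n*\xi_{n-1}+\Upsilon_n*\tilde w_{n-1}(\cdot)\,)$. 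The inner substitution is controlled by Lemma \ref{lem: gammamap}(a): on $B_{n-1}\subset\Phi_{n-1}$ the shift $\Upsilon_n*\xi_{n-1}+\Upsilon_n*\tilde w_{n-1}$ has $\Phi_{n-1}$-norm $\leq C(L)(L^{\ga n}+L^{-n/4})$, which for $n\geq m(\ga,L)$ large and $\ga$ small lands inside the ball $B(L^{\hf}L^{2\ga n})$ on which $\caS$ acts, so that by \eqref{circ} and Proposition \ref{prop: linrgnorm} one gets $|||\Psi(\tilde w_{n-1})|||_{B_{n-1}}\leq CL^{3/2}|||\tilde w_n^{(N)}|||_{B_n}\leq CL^{3/2}L^{-n/4}\leq L^{-(n-1)/4}$ provided $L$ is large, closing the self-map bound. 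Contraction in $\tilde w_{n-1}$ follows from \eqref{circ1}: varying the argument of $\caS\tilde w_n$ through $\Upsilon_n*(\tilde w_{n-1}-\tilde w_{n-1}')$ costs a factor $C(L)\cdot L^{3/2-\hf}\cdot L^{-n/4}$ times $|||\tilde w_{n-1}-\tilde w_{n-1}'|||$, which is $<1$ once $n\geq m(\ga,L)$; Banach fixed point then gives the unique $\tilde w_{n-1}^{(N)}\in H^\infty(B_{n-1},\caV_{n-1})$. The base case $n=N$ is \eqref{eq: first v}: one checks directly, using $\|h_N\vartheta_N\|_{\caV_N}$ and the definition \eqref{eq: C_linear} of $C_{L^{-N}}$, that the explicitly given quadratic polynomial satisfies $|||\tilde w_N^{(N)}|||_{B_N}\leq L^{-N/4}$ for $\ga$ small.

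Having the solutions, the convergence $\tilde w_n^{(N)}\to\tilde w_n$ as $N\to\infty$ I would obtain by a telescoping/stability estimate. Fix $n$ and compare the iteration run with cutoff parameter $N$ to the one run with $N+1$; by the last remark before Lemma \ref{lemma: heatkernel}, replacing $N$ by $N+1$ amounts to replacing the lower cutoff $\chi$ by $\chi'(s)=\chi(L^2s)$, so the two iterations differ only through the noise fields, and the relevant difference is controlled by \eqref{eq: amevent2}: $\|h_n(\zeta_n'^{(N)}-\zeta_n^{(N)})\|_{\caV_n}\leq L^{-\ga(N-n)}L^{\ga n}$. Set $d_n^{(N)}:=|||\tilde w_n^{(N+1)}-\tilde w_n^{(N)}|||_{B_n}$. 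Subtracting the two versions of \eqref{eq: vn+1new1tildeee1} and using \eqref{circ1}, \eqref{circ0} and Proposition \ref{prop: linrgnorm} once more, one gets a recursion of the schematic form $d_{n-1}^{(N)}\leq CL^{3/2-\ga?}\,d_n^{(N)}+ (\text{explicit noise difference})$, where the driving term comes from the difference of the perturbative parts $u_{n,i}$ built out of $\zeta_n'^{(N)}-\zeta_n^{(N)}$ and is $\leq C L^{-\ga(N-n)}L^{-n/2}$ in the relevant norm; the homogeneous factor is a contraction for the same reason as above (the bound $L^{3/2}$ is beaten by the gain $L^{-n/4}$ from $|||\tilde w_n|||$ in the Lipschitz constant). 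Iterating down from the base scale, where $d_N^{(N)}$ is $O(e^{-cL^{2N}})$ by \eqref{eq: deltaNnN'bound}, yields $d_n^{(N)}\leq C L^{-\ga(N-n)}L^{-n/2}$, a geometric series in $N$; hence $(\tilde w_n^{(N)})_N$ is Cauchy in $H^\infty(B_n,\caV_n)$ and converges to some $\tilde w_n$, which then solves the $N=\infty$ version of the fixed point equation by continuity. Independence of the limit on the choice of small scale cutoff $\chi$ follows from the very same estimate: two choices $\chi,\chi'$ produce noise fields differing by \eqref{eq: amevent2}, so the two limits differ by $\lim_{N\to\infty} O(L^{-\ga(N-n)})=0$.

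The main obstacle I expect is the bookkeeping in the convergence step: one must verify that the $N$-dependence genuinely enters only through the noise fields $\zeta_n^{(N)}$ and the counterterm, i.e. that the deterministic operators $\caS$, $\Upsilon_n$, $h_n$ and the perturbative polynomials $u_{n,i}$ depend on $N$ only via those fields, and that the error propagated down the induction does not accumulate faster than the gain $L^{-\ga(N-n)}$ provides. This requires being careful that the Lipschitz constant of the RG step in the relevant norm is strictly below $1$ uniformly in $n\geq m$ — which is exactly where the factor $L^{-n/4}$ from the solution ball \eqref{eq: solball} must be used to defeat the $\|\caS\|\leq CL^{3/2}$ growth — and that the "relevant" parts of $u_{n,2},u_{n,3}$ (the constant and linear pieces isolated in \eqref{eq:U_2}--\eqref{eq:U_3}) are the only ones that could threaten the contraction, the remaining pieces $U_{n,i}$ being irrelevant and hence harmless. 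A secondary technical point is checking that the time-localization via $h_n$ is consistent across scales, i.e. that the identities $h_{n-1}\caS w_n=h_{n-1}\caS\tilde w_n$ and $\Upsilon_n w_{n-1}=\Upsilon_n\tilde w_{n-1}$ on $[0,\tau_{n-1}]$ used to pass from \eqref{eq: vn+1new1tildeee} to \eqref{eq: vn+1new1tildeee1} are compatible with the analytic-function framework, but this is routine given Lemma \ref{lem: gammamap}.
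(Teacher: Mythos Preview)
Your self-map estimate does not close. You write
\[
|||\Psi(\tilde w_{n-1})|||_{B_{n-1}}\leq CL^{3/2}|||\tilde w_n^{(N)}|||_{B_n}\leq CL^{3/2}L^{-n/4}\leq L^{-(n-1)/4}\quad\text{for $L$ large,}
\]
but $CL^{3/2}L^{-n/4}=CL^{5/4}L^{-(n-1)/4}$, so the last inequality demands $CL^{5/4}\leq 1$, i.e.\ $L$ \emph{small}. This is not a bookkeeping slip: it is precisely the obstruction explained in Section~4.2. The linearized RG $\caL$ expands by $L^{3/2}$ on constants (and by $L$, $L^{1/2}$ on the linear and quadratic parts), so no induction on the full $\tilde w_n$ with decay rate $L^{-n/4}$ can survive a bare application of Proposition~\ref{prop: linrgnorm}. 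The contraction step has the same defect once you track constants: the Lipschitz factor you quote contains $L^{3/2}$ from $\|\caS\|$ but only gains $L^{-n/4}$ from $|||\tilde w_n|||$, which is not enough.

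The paper's proof fixes this by \emph{not} running the fixed point on $\tilde w_{n-1}$ directly. One writes $w_n=u_{n,1}+u_{n,2}+u_{n,3}+r_n$ with the $u_{n,i}$ the explicit perturbative contributions of Section~4, and runs the fixed point argument on the remainder equation \eqref{reminder1}--\eqref{reminder2} for $r'$. The point is that after subtracting three perturbative orders the remainder satisfies $|||r_n|||_{B_n}\leq L^{-7n/4}$, and now $CL^{3/2}L^{-7n/4}=CL^{-1/4}L^{-7(n-1)/4}\leq L^{-7(n-1)/4}$ for $L$ large, so the induction closes. The relevant (expanding) parts sit entirely inside $u_{n,1},u_{n,2}(0),Du_{n,2}(0),u_{n,3}(0)$ and are controlled not by the RG recursion but directly by the noise bounds \eqref{eq: amevent1}; the irrelevant pieces $U_{n,2},U_{n,3}$ are handled by \eqref{circ2} (Taylor remainder) which produces the extra contraction factor. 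Your convergence argument in the second paragraph is essentially correct in spirit, but it too must be carried out on the decomposition $u_{n,i}+r_n$ rather than on $\tilde w_n$ as a whole, using \eqref{eq: amevent2} for the $u_{n,i}$-differences and the same remainder recursion for $r_n-r_n'$.
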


\begin{proof} 
We will drop the tilde from now on so that  $w_n$, $r_n$ and $ u_{n,i}$ stand  for 
$\tilde w_n$ etc. Also, if no confusion arises we let $w$ and $w'$ stand for $w_n$ and $w_{n-1}$ respectively.  We start with the perturbative contributions $u_i$. 
As a corollary of Lemma \ref{lem: gammamap}(c) and \eqref{eq: amevent1} we obtain for $n\geq m$ and $N\geq n$:
\begin{align}
|||u_{n,1}|||_{RB_n}& \leq  CR^2 L^{(4\ga -\hf)n}
\label{eq:unormbound1}
\\
|||u_{n,2}(0)+Du_{n,2}(0)\phi|||_{RB_n}&\leq CR L^{(3\ga-1) n}
\label{eq:Unormbound1}\\
||u_{n,3}(0)||_{\caV_n}& \leq  C L^{(\ga-\frac{3}{2}) n}
\end{align}
for all $R\geq 1$. We used also $\|h_n\|_{C^{2,2}}\leq C$. 

We need to bound the remainder terms  $U_2$ and $U_{3}$ in \eqref{eq:U_2} and  \eqref{eq:U_3}. We do this inductively in $n$. We have
$$
u_2'(\phi)=\caL u_2(\phi)
+Du_1'(\phi)\Upsilon_{n}u_1'(\phi):=\caL u_2(\phi)+v_2(\phi)
$$
Using Lemma \ref{lem: gammamap}(a), \eqref{eq:unormbound1} and \eqref{circ0} we get
$$
|||v_2|||_{L^\hf B_{n-1}}\leq C(L)L^{(6\gamma-1) n}.
$$
Let us inductively assume
\beq
\label{eq:u2ind}
|||u_{n,2}|||_{B_{n}}\leq CL^{(7\gamma -1)n}.
\eeq
Using Proposition \ref{prop: linrgnorm} and \eqref{eq: Gammaxibound} we get the following useful result
\beq
|||\caL W|||_{L^\hf B_{n-1}}\leq CL^{\frac{3}{2}}|||W|||_{B_n}.
\label{eq: caLbound}
\eeq
for all $W\in H^\infty(B_{n},\caV_{n})$ since $B(L^{2\gamma(n-1)}+ L^{\gamma n})\subset B_n$ if $L>L(\gamma)$. Thus
$$
|||u'_2|||_{L^\hf B_{n-1}}\leq C L^{\frac{3}{2}}|||u_2|||_{B_n}+ C(L)L^{(6\gamma -1)n}\leq  C L^{\frac{3}{2}}L^{(7\gamma-1) n}
$$
if $n>n(\gamma,L)$.  Then by \eqref{circ2}
$$
|||U'_2|||_{B_{n-1}}=
|||\delta_2u_2'|||_{B_{n-1}}\leq CL^\hf L^{(7\gamma -1)n}.
$$
Using \eqref{eq:Unormbound1}, the bound \eqref{eq:u2ind} follows for $n-1$ provided we take $\gamma$ so that $\,  \hf+(7\gamma-1)<0$.

 For $u_{n,3}$ we have the recursion
 \begin{align}
u'_{3}(\phi)& =\caL u_{3}(\phi)+v_3(\phi)
\label{eq:Uiteration and ic }
\end{align}
with 
$$
v_3(\phi)=\hf D^2u_1'(\phi) (\Upsilon u_1',\Upsilon u_1')+Du_1'(\phi)\Upsilon u_2'+D\caL u_2(\phi)\Upsilon u_1'.
$$
We readily get
$$
|||v_3|||_{L^\hf B_{n-1}}\leq C(L)L^{(8\gamma-\frac{3}{2}) n}.
$$
The inductive bound
\beq
\label{eq:u3ind}
|||u_{n,3}|||_{B_{n}}\leq CL^{(9\gamma-\frac{3}{2})n}.
\eeq
follows then in the same way as for $u_2$, using $U_3=\delta_1u_3$.

Now we are ready to solve equation \eqref{reminder1} by Banach fixed point theorem. Thus consider the map
\beq\label{reminder3}
\caG(r')=\caL r(\phi+\Upsilon *w')+\caF(r')
\eeq
where $\caF(r')$ is given by   \eqref{reminder2}. 

We have
$$
\caF_1(r')(\phi)=L^{-\frac{n-1}{2}}(2(\phi+\vartheta,M(\Upsilon *(u_3'+r'))+(\Upsilon *(u_2'+u_3'+r'),M \Upsilon *(u_2'+u_3'+r')))
$$
so that
$$
|||\caF_1(r')|||_{B_{n-1}}\leq C(L)(L^{(14\gamma-2)(n-1)}+L^{(2\gamma-\hf)(n-1)}|||r'|||_{B_{n-1}}).
$$
Next we write
\begin{align}
\caF_2(r')&=\caL u_{2}(\phi+\Upsilon *w')-\caL u_{2}(\phi+\Upsilon *u_1')\non\\&+\caL u_{2}(\phi+\Upsilon *u_1')-\caL u_{2}(\phi)-D\caL u_{2}(\phi)\Upsilon *u'_1
\non\\&\equiv \caF_{2,1}(r')+\caF_{2,2}(r')
\end{align}
Using   \eqref{circ1} we obtain
$$
|||\caF_{2,1}(r')|||_{B_{n-1}}\leq C(L)(L^{(14\gamma-2)(n-1)}+L^{(7\gamma-1)(n-1)}|||r'|||_{B_{n-1}})
$$
To bound $\caF_{2,2}(r')$ consider the function $f(z)= \caL u_{2}(\phi+z\Upsilon *u_1')$ for $z\in\C$. Since $ \caL u_{2}$ is analytic in $L^\hf B_{n-1}$ and 
$$\|\phi+z\Upsilon *u_1'\|_{\Phi_{n-1}}\leq L^{2\gamma(n-1)}+C L^{(4\gamma-\hf)(n-1)}|z|$$
we get that $f$ is analytic in the ball $|z|\leq C  L^{(\hf-2\gamma)(n-1)}$. Since $\caF_{2,2}(r')(\phi)=f(1)-f(0)-f'(0)$ we conclude by a Cauchy estimate
$$
|||\caF_{2,2}(r')|||_{B_{n-1}}\leq C(L)L^{(15\gamma-2)(n-1)}.$$
For $\caF_{3}$ we get using \eqref{eq: caLbound}  and \eqref{circ1}
$$
|||\caF_{3}(r')|||_{B_{n-1}}\leq C(L)(L^{(13\gamma-2)(n-1)}+L^{(9\gamma-2)(n-1)}|||r'|||_{B_{n-1}})
$$
Consider finally the first term in \eqref{reminder3}. \eqref{eq: caLbound} implies 
$$
|||\caL r(\cdot+\Upsilon *w')|||_{B_{n-1}}\leq CL^{\frac{3}{2}}|||r|||_{B_{n}}.
$$
We conclude that by taking $\gamma$ small enough if 
\beq
|||r|||_{B_{n}}\leq L^{-\frac{7}{4}n}\label{rbnbound}
\eeq
then $\caG$ maps the ball 
$
|||r'|||_{B_{n-1}}\leq L^{-\frac{7}{4}(n-1)}
$
to itself. It is now straightforward to check that $\caG$ is a contraction in this ball so that by induction in $n$ \eqref{rbnbound} holds for all $n\geq m$.

Let us address the convergence as $N\to\infty$ and cutoff dependence of $w_n=w_n^{(N)}$ which can be dealt with together by considering the difference $w_n -w'_n$ where $w'_n$ equals $w^{N+1}_n$ or $w^{N}_n$ with a different cutoff. We proceed as with $w_n$, starting with the following bounds that follow from  \eqref{eq: amevent2}:
 for all
$n\geq m$ and $N\geq n$
\begin{align}
|||u_{n,1}-u'_{n,1}|||_{RB_n}& \leq  CR^2 L^{-\ga (N-n)} L^{(4\ga-\hf)n}
\label{eq:unormbound2}
\\
|||u_{n,2}(0)+Du_{n,2}(0)\phi-(u'_{n,2}(0)+Du'_{n,2}(0)\phi)|||_{RB_n}&\leq CR L^{-\ga (N-n)}L^{(3\ga-1)n}
\label{eq:Unormbound2}\\
||u_{n,3}(0)-u'_{n,3}(0)||_{\caV_n}& \leq  C L^{-\ga (N-n)}L^{(\ga-\frac{3}{2})n}
\end{align}
for all $R\geq 1$. The induction then goes as for $w^{N}_n$, except for the prefactor $ L^{-\ga (N-n)}$ in all the bounds. This establishes the convergence of $ w^{(N)}_{n}$ to a limit that is independent on the 
short time cutoff.
\end{proof}

\section{Proof of Theorem \ref{main result} } \label{se:main result}

We can now construct the solution $\varphi^{(\ep)} $ of the $\ep$ cutoff equation
\eqref{eq:neweq1 } and consequently $ u^{(\ep)}$ in \eqref{eq:inteqeps }.  Recall that  formally $u^{(\ep)} $ is given on time interval
$[0,L^{-2m}]$ by equation \eqref{eq: finalsolution} with $n=m$ and $\varphi_m$ is the solution of equation 
\eqref{eq:neweqn } on  time interval
$[0,1]$. Hence we first need to study the $f$ iteration equation  \eqref{eq: fn+1new} which is equivalent to \eqref{eq: fn+1new2}.
We study instead of  \eqref{eq: fn+1new2} the localized iteration
\beq
\tilde f^{(N)}_{n-1}(\phi)=h_{n-1} 
L^{-1}\caS \tilde f^{(N)}_n(\phi+\Upsilon_{n}*(\tilde w^{(N)}_{n-1}(\phi)+\xi_{n-1}))
 \label{eq: fn+1new1}
\eeq
for
$\tilde f_{n}^{(N)}=h_{n} f_{n}^{(N)}$. Then we can show the following Proposition.

\begin{proposition}\label{prop: solution of fiteration} 
Let $\tilde w_{n}^{(N)}\in H^{\infty}(B_n, \caV_n)$, $m\leq n\leq N$ be as in 
Proposition \ref{prop: solution of fp} and $ \phi \in B_n$. Then for $m\leq n\leq N$ $\tilde f_{n}^{(N)}\in H^\infty(B_{n},\caV_{n})$ and  $\tilde f_{n}^{(N)}(\phi)$ converges in $ H^\infty(B_{n},\caV_{n})$   as $N\to\infty$ to a limit $ \psi_n$ which is independent of the  cutoff function.
\end{proposition}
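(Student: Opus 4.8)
The plan is to prove Proposition \ref{prop: solution of fiteration} by downward induction on $n$ (from $N$ down to $m$), along the same lines as the proof of Proposition \ref{prop: solution of fp}, but exploiting one simplification: \eqref{eq: fn+1new1} is not a fixed point equation but a plain recursion, since the $\tilde w_{n-1}^{(N)}$ appearing in it are already under control by Proposition \ref{prop: solution of fp}. So one only has to estimate the iterates produced by \eqref{eq: fn+1new1}, starting from $\tilde f_N^{(N)}=h_N\cdot\mathrm{id}$, which lies in $H^\infty(B_N,\caV_N)$ because bounded $C^{2,2}$ functions embed continuously in $\caV_N$ and $\|h_N\|_{C^{2,2}}\le C$. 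Each induction step then amounts to: (i) checking that $\phi\mapsto\phi+\Upsilon_n*(\tilde w_{n-1}^{(N)}(\phi)+\xi_{n-1})$ sends $B_{n-1}$ analytically into $\Phi_{n-1}$ and, after applying $\frs$, into the ball $B_n$ on which $\tilde f_n^{(N)}$ is defined — this uses Lemma \ref{lem: gammamap}(a), the smallness $|||\tilde w_{n-1}^{(N)}|||_{B_{n-1}}\le L^{-(n-1)/4}$ from \eqref{eq: solball}, the bound \eqref{eq: Gammaxibound} on $\Upsilon_n*\xi_{n-1}$, and $\|\frs\|\le L^{-\hf}$; (ii) composition of analytic maps, facts (a)--(c) above; (iii) the mapping properties of $L^{-1}\caS=\frs^{-1}(\,\cdot\,)(\frs\,\cdot\,)$ from Lemma \ref{lem: gammamap}(b) and Proposition \ref{prop: linrgnorm}; (iv) multiplication by $h_{n-1}$, which costs only a constant by Lemma \ref{lem: gammamap}(c). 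This shows $\tilde f_n^{(N)}\in H^\infty(B_n,\caV_n)$; the real work is to make the bound uniform in $N$.

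The main obstacle is that, unlike the linearized map $\caL$ in Proposition \ref{prop: solution of fp}, the operator $L^{-1}\caS$ is not contracting: from $\caS\phi^k=L^{(3-k)/2}\phi^k$ it multiplies the degree-$k$ part of $\tilde f_n$ by $L^{(1-k)/2}$, so the identity (degree $1$) is exactly preserved, whereas the constant part (degree $0$) is \emph{amplified} by $L^{\hf}$. Iterating the crude bound $\|L^{-1}\caS\|\le CL^{\hf}$ would therefore only give $|||\tilde f_n^{(N)}|||_{B_n}\lesssim(CL^{\hf})^{N-n}$, useless as $N\to\infty$. The remedy is to write $\tilde f_n^{(N)}=h_n\cdot\mathrm{id}+g_n^{(N)}$ and use $L^{-1}\caS(\mathrm{id})=\mathrm{id}$, so that the marginal (identity) direction never grows; the degree-$\ge2$ part of $g_n^{(N)}$ genuinely contracts, and the constant and linear parts — the only relevant directions left — are tamed by the time localization. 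Here the point is that $\tilde f_n^{(N)}$ is supported in $[0,\tau_n]$ with $\tau_n=L^{2(n-m)}$, and a contribution entering the recursion at scale $k$ is, after the $k-n$ applications of $\frs^{-1}$ needed to bring it to scale $n$, compressed into a time window of width $\sim L^{-2(k-n)}$; although $\frs^{-1}$ inflates the pointwise amplitude by $L^{\hf}$ per step, the $\caV_n$-norm of such a narrow spike, computed after convolution with the fixed-width kernel $K$, is expected to \emph{shrink} by $\sim L^{-3/2}$ per step. The sum over scales $k$ of the contributions to $g_n^{(N)}(0)$ and $Dg_n^{(N)}(0)$ then converges geometrically for $\gamma$ small, and together with \eqref{eq: solball} and \eqref{eq: Gammaxibound} this yields $|||\tilde f_n^{(N)}|||_{B_n}\le C(n,m,L)$ uniformly in $N$ — equivalently, for fixed $n$ only a bounded band of scales contributes effectively. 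I expect pinning down this ``$L^{-3/2}$-per-step'' gain for time-compressed contributions to be the one genuinely new estimate; the rest is a bookkeeping variant of the proof of Proposition \ref{prop: solution of fp}.

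For the convergence, for $N,N'$ large the iterates $\tilde f_n^{(N)}$ and $\tilde f_n^{(N')}$ are built from the same effective band of scales and differ only through $\tilde w_{k-1}^{(N)}-\tilde w_{k-1}^{(N')}$, the white-noise inputs $\xi_{k-1}$ being common to both and cancelling. Proposition \ref{prop: solution of fp} gives $|||\tilde w_{k-1}^{(N)}-\tilde w_{k-1}^{(N')}|||\to0$ with rate $L^{-\gamma(\min(N,N')-k)}$ (via \eqref{eq: amevent2}); propagating this difference through the finitely many, $N$-independent iteration steps — with the same composition and $\frs^{-1}$ bounds plus \eqref{circ1} — gives $|||\tilde f_n^{(N)}-\tilde f_n^{(N')}|||_{B_n}\to0$ as $N,N'\to\infty$, so $(\tilde f_n^{(N)})_N$ is Cauchy in $H^\infty(B_n,\caV_n)$ and converges to the desired limit $\psi_n$. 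Cutoff independence of $\psi_n$ follows from the same argument applied to $\tilde f_n^{(N)}$ computed with two different lower cutoffs $\chi,\chi'$, again using \eqref{eq: amevent2} and the cutoff independence of the limiting $\tilde w_n$ furnished by Proposition \ref{prop: solution of fp}.
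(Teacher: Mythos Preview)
There is a genuine gap in the heart of your argument: the claimed ``$L^{-3/2}$-per-step'' gain from time localization does not hold. A source term entering the recursion at scale $k$ is multiplied by $h_{k-1}$ and therefore has time support $[0,\tau_{k-1}]$ with $\tau_{k-1}=L^{2(k-1-m)}$, not $O(1)$. After the $k-n$ applications of $\frs^{-1}$ (each shrinking the time variable by $L^{-2}$) its support becomes $[0,L^{-2(k-n)}\tau_{k-1}]=[0,\tau_{n-1}]$ --- essentially the full time window at scale $n$, independently of $k$. There is no ``narrow spike'' to exploit, and the $\caV_n$-norm does not pick up any smallness from time compression. Without this mechanism the constant and linear (in $\phi$) directions of $g_n$ are not controlled, and iterating the crude bound $\|L^{-1}\caS\|\le CL^{\hf}$ gives a factor $(CL^{\hf})^{N-n}$ that blows up as $N\to\infty$.

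The paper tames these directions by a different, structural mechanism. First it splits $\tilde f_n^{(N)}(\phi)=h_n(\phi+\vartheta_n^{(N)})+g_n^{(N)}(\phi)$, i.e.\ it removes not only the identity but also the accumulated noise $h_n\vartheta_n^{(N)}$, which is controlled directly by the probabilistic input \eqref{eq: amevent1}. The resulting recursion for $g_n^{(N)}$ has source $h_{n-1}\Upsilon_n\ast\tilde w_{n-1}^{(N)}$, whose marginal part is identified explicitly: one writes $g_n^{(N)}=h_nY_n^{(N)}\ast\tilde u_{n,1}^{(N)}+b_n^{(N)}$. The piece $h_nY_n^{(N)}\ast\tilde u_{n,1}^{(N)}$ is bounded \emph{uniformly in $N$} by combining the $L^p$ bound $\caY_n=\sup_N|Y_n^{(N)}|\in L^p(\bbR\times\bbT_n)$ for $p<\tfrac32$ (Lemma~\ref{lem: greg}) with Young's inequality and \eqref{eq:unormbound1}, yielding $|||h_nY_n^{(N)}\ast\tilde u_{n,1}^{(N)}|||_{B_n}\le CL^{(4\gamma-\hf)n}$. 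With the marginal part stripped off, the remainder $b_n^{(N)}$ obeys a genuinely contracting recursion and one gets $|||b_n^{(N)}|||_{B_n}\le L^{-\frac34 n}$. Convergence and cutoff independence then follow from the corresponding statements for $\tilde w_n^{(N)}$ and the $\|\delta\caY_n\|_p$ bound in Lemma~\ref{lem: greg}. In short: the relevant directions are handled by explicit perturbative identification plus kernel $L^p$-estimates, not by time localization.
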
 

\begin{proof} 
Let us write
$$
\tilde f_{n}^{(N)}(\phi)=h_n(\phi+\vartheta_{n}^{(N)})+ g_{n}^{(N)}(\phi).
$$
Then
$$
 g_{n-1}^{(N)}(\phi)=h_{n-1}L^{-1}\caS  g^{(N)}_n(\phi+\Upsilon_{n}*(\tilde w^{(N)}_{n-1}(\phi)+\xi_{n-1}))+h_{n-1}\Upsilon_{n} * \tilde w^{(N)}_{n-1}(\phi).
$$
Note that the operator $L^{-1}\caS $ has norm bounded by $CL^\hf$ and $\tilde w^{(N)}_{n}$ has norm bounded by $CL^{(-\hf+4\gamma)n}$. Hence we need to extract the leading ``marginal'' part from $\tilde w^{(N)}_{n}$:
$$
g^{(N)}_{n}=h_n Y^{(N)}_{n} * \tilde u_{n,1}^{(N)}+b^{(N)}_{n}.
$$
As we will see in Section \ref{proof_prop} (Lemma \ref{lem: greg}), uniformly in $ n$ we have
$$ \caY_n :=\sup_{N \geq n} |Y_n^{(N)}| \in L^p(\R \times \T_n), \ \ \ \|Y_n'^{(N)}-Y_n^{(N)} \|_p^p \leq C L^{-\lambda(N-n)} \| \chi'-\chi \|_{\infty}
$$ for $ p < \frac{3}{2}$ and some $ \lambda >0$. Then thanks to Lemma \ref{lem: gammamap}(c) and Young's inequality we have
\begin{align}\nonumber
\|h_n Y^{(N)}_{n} * \tilde u_{n,1}^{(N)} \|_{\caV_n} \leq & C\sum_{i} \| K * Y_n^{(N)} * \tilde u_{n,1} \|_{L^2(\frc_i)} \leq C \sum_{i} \| \caY_n  \|_{L^1(\frc_i)} \| K * \tilde u_{n,1} \|_{L^2(\frc_i)} \\\nonumber
 \leq & C \| \caY_n  \|_{L^1(\R \times \T_n)} \sum_{i}  \| K * \tilde u_{n,1} \|_{L^2(\frc_i)} \\\nonumber
 \leq & C \|  \tilde u_{n,1} \|_{\caV_n}.
\end{align}
Thus $ ||| h_n Y^{(N)}_{n} * \tilde u_{n,1}^{(N)} |||_{B_n} \leq C L^{(4\gamma-\frac{1}{2})n} $ by Proposition \ref{prop: solution of fp}. 
Then the iteration of $b^{(N)}_{n}$ gives easily $|||b^{(N)}_{n} |||_{B_n}\leq L^{-\frac{3}{4}n}$, which implies that $ ||| g^{(N)}_{n}|||_{B_n} \leq L^{-\frac{1}{4}n}$.

The convergence and cutoff independence follows from that of $ \tilde w_n^{(N)}$ proved in Proposition \ref{prop: solution of fp}.
\end{proof}



Moreover, we need also this technical Lemma.
\begin{lemma}\label{lem: gammamap1}  
$ \partial_x G_1  $ is a bounded operator from $\caV_n$  to $\Phi_n$ and
 $\partial_x G_1 \ast (h_{n-1}(L^{-2}\cdot) v)=\partial_x G_1 \ast v $.
\end{lemma}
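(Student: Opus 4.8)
The plan is to establish the two assertions separately, both relying on the structure of $G_1(t,x)=e^{t\Delta}(x,0)(1-\chi(t))$: it vanishes for $t\le1$ (because $1-\chi\equiv0$ on $[0,1]$) and is otherwise $C^\infty$ on $\bbR\times\bbT_n$, the heat--kernel singularity at the origin having been cut away; moreover $\partial_x G_1$ and all its space--time derivatives decay exponentially in $t$, since $e^{t\Delta}(\cdot,0)$ tends on $\bbT_n$ to a constant whose $x$--derivative vanishes (the rate degrades with $n$, but the statement asks only for a bound, not a uniform one).

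For the boundedness $\caV_n\to\Phi_n$ I would first prove the estimate for $v\in C_0^\infty(\bbR_+\times\bbT_n)$ and then extend by density. The idea is that the smoothing built into the $\caV_n$--norm can be undone against $G_1$: since $(-\partial_t^2+1)K_1=\delta$ and $(-\Delta+1)K_2=\delta$ one has $(-\partial_t^2+1)(-\Delta+1)(K\ast v)=v$, so moving the operators onto the kernel gives, for $i,j\le2$,
\[
\partial_t^i\partial_x^j\big(\partial_x G_1\ast v\big)=F_{ij}\ast(K\ast v),\qquad F_{ij}:=(-\partial_t^2+1)(-\Delta+1)\,\partial_t^i\partial_x^{j+1}G_1 .
\]
Each $F_{ij}$ carries at least one spatial derivative, hence is smooth with exponential time decay, so $\sum_k\|F_{ij}\|_{L^2(\frc_k)}<\infty$ over the unit cubes $\frc_k$; a cube--by--cube Cauchy--Schwarz estimate of the convolution then gives $|\partial_t^i\partial_x^j(\partial_x G_1\ast v)(t,x)|\le\big(\sum_k\|F_{ij}\|_{L^2(\frc_k)}\big)\|v\|_{\caV_n}$ uniformly in $(t,x)$, hence $\|\partial_x G_1\ast v\|_{\Phi_n}\le C(L,n)\|v\|_{\caV_n}$. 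For $v\in C_0^\infty(\bbR_+\times\bbT_n)$ the function $\partial_x G_1\ast v$ is moreover supported in $\{t\ge1\}$ ($\partial_x G_1(t-s,\cdot)=0$ unless $t-s>1$ while $v(s,\cdot)=0$ unless $s>0$), and this persists under the density limit, so $\partial_x G_1\ast v\in\Phi_n$ with $\partial_t^i(\partial_x G_1\ast v)(0,\cdot)=0$ for $i\le2$.

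For the cutoff identity I would simply track the time supports. By the definition of $h_k$, $h_{n-1}(s)=1$ for $s\le\tau_{n-1}-L^{-2}$ and $=0$ for $s\ge\tau_{n-1}-\hf L^{-2}$; since $\tau_k=L^{2(k-m)}$ we have $L^2\tau_{n-1}=\tau_n$, so $h_{n-1}(L^{-2}\cdot)$ equals $1$ on $[0,\tau_n-1]$ and $0$ on $[\tau_n-\hf,\infty)$, whence $\big(1-h_{n-1}(L^{-2}\cdot)\big)v$ — which lies in $\caV_n$ by Lemma~\ref{lem: gammamap}(c) — is supported in $\{s\ge\tau_n-1\}$. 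On the other hand, for $(t,x)\in[0,\tau_n]\times\bbT_n$ the kernel $\partial_x G_1(t-\cdot,x-\cdot)$ vanishes unless $s<t-1\le\tau_n-1$, while $1-h_{n-1}(L^{-2}\cdot)$ vanishes on $\{s\le\tau_n-1\}$; the two are never simultaneously nonzero, so $\partial_x G_1\ast\big((1-h_{n-1}(L^{-2}\cdot))v\big)\equiv0$ on $[0,\tau_n]\times\bbT_n$, i.e.\ as an element of $\Phi_n$ — which is precisely $\partial_x G_1\ast(h_{n-1}(L^{-2}\cdot)v)=\partial_x G_1\ast v$.

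The steps needing genuine care — and where I expect the (minor) main obstacle — are the transfer of $(-\partial_t^2+1)(-\Delta+1)$ from $K\ast v$ onto $\partial_x G_1$ for an arbitrary $v\in\caV_n$ rather than a test function (handled by the density reduction, the exponential time decay of the $F_{ij}$ making all the convolutions absolutely convergent), and the scaling bookkeeping $L^2\tau_{n-1}=\tau_n$ together with the exact placement of the time supports in the second part. Neither is deep.
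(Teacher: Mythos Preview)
Your proposal is correct and follows the standard approach one would expect; the paper itself gives no proof beyond citing \cite{AK}, Lemma 14, and your argument is precisely the kind of elementary kernel analysis that reference contains. Both parts are handled cleanly: the transfer of $(-\partial_t^2+1)(-\Delta+1)$ onto $\partial_x G_1$ together with the cube--by--cube Cauchy--Schwarz is the natural way to pass from the $\caV_n$ norm to sup bounds, and your time--support bookkeeping for the cutoff identity (in particular the observation $L^2\tau_{n-1}=\tau_n$ and that $G_1$ vanishes on $[0,1]$) is exactly right.
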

\begin{proof}
As in \cite{AK}, Lemma 14.
\end{proof}

Now we can finally prove our main result: 
let $\phi_n\in\Phi_n$ be defined inductively by $\phi_m=0$ and for $n>m$
\beq
 \label{eq:phinplus1 }
\phi_{n}= \frs(\phi_{n-1}+\Upsilon_n *(\tilde w^{(N)}_{n-1}(\phi_{n-1})+\xi_{n-1})).
\eeq
We claim that for all $m\leq n\leq N$ $\phi_n\in B_n$ and
\beq
 \label{eq:iteration for solution }
\phi_{n}= \partial_x G_1 \ast (\tilde w^{(N)}_{n}(\phi_{n})+\xi_{n}).
\eeq
Indeed, this holds trivially for $n=m$ since the RHS vanishes identically on $[0,1]$.
Suppose $\phi_{n-1}\in B_{n-1}$ satisfies
\beq
 \label{eq:neweq1mod }
\phi_{n-1}= \partial_x  G_1 \ast (\tilde w^{(N)}_{n-1}(\phi_{n-1})+\xi_{n-1}).
\eeq
Then, first by Lemma \ref{lem: gammamap}(b) and \eqref{eq:phinplus1 } 
\beq
 \non
\|\phi_{n}\|_{\Phi_{n}}\leq L^{-\hf} \|\phi_{n-1}\|_{\Phi_{n-1}}+C(L)L^{\ga n}\leq L^{2\ga n}
\eeq
so that 
 $\phi_{n}\in B_{n}$.  Second, we have
by
 \eqref{eq:neweq1mod }, \eqref{eq:phinplus1 } and Lemma \ref{lem: gammamap1}
\begin{align}\non
\phi_{n}&= \frs((\partial_x G_1+\Upsilon) \ast (\tilde w^{(N)}_{n-1}(\phi_{n-1})+\xi_{n-1})) \\
&= \partial_x G_1 \ast (\tilde w^{(N)}_{n}(\phi_{n})+\xi_{n}).\label{eq:phinplus1' }
\end{align}
Since $\phi_m=0$, from \eqref{eq: fn+1new1} we have 
\beq
\tilde f^{(N)}_{m}(0)=h_{m} 
\frs^{-1} \tilde f^{(N)}_{m+1}
(\phi_{m+1})=h_{m} h_{m+1}(L^2\cdot)\frs^{-2}\tilde f^{(N)}_{m+2}(\phi_{m+2})=h_{m} 
\frs^{-2} \tilde f^{(N)}_{m+2}(\phi_{m+2}),
\non
\eeq
then, iterating we get
\beq
\tilde f^{(N)}_{m}(0)=h_{m} 
\frs^{-(N-m)}\tilde f^{(N)}_N(\phi_N)=h_{m}h_{N}(L^{-2(N-m)}\cdot)\frs^{-(N-m)}\phi_N=h_{m}\frs^{-(N-m)}\phi_N
\label{eq:tildefn0 }
\eeq
since $ f^{(N)}_N(\phi_N)=\phi_N $ by \eqref{eq: first f}.
Now $\phi_N\in B_N$ solves \eqref{eq:iteration for solution } with
 $\tilde w^{(N)}_N(\phi)=h_{N} w^{(N)}_N(\phi)$ with $w^{(N)}_N$ given by \eqref{eq: first v}.
Since $h_{N} =1$ on $[0,\tau_{N}-L^{-2}]$ we obtain
\beq
\non
\phi_{N}= \partial_x G_1 \ast ( \tilde w^{(N)}_{N}(\phi_{N})+\xi_{N})=\partial_x G_1 \ast ( w^{(N)}_{N}(\phi_{N})+\Xi).
\eeq
To take the limit $ N \to \infty$ we will use \eqref{eq:tildefn0 }:  
defining $ \eta^{(N)} := \frs^{-N}\phi_{N}$, then we get
\beq
\non
\eta^{(N)}=\frs^{-m}\tilde f^{(N)}_{m}(0)
\eeq
on the time interval $[0,\hf L^{-2m}] \subset [0,1]$.

By Proposition \ref{prop: solution of fiteration} 
$\tilde f^{(N)}_m(
0)$ converges in $\caV_m$ to a limit $\psi_m$ which is independent of the
short distance cutoff.
Convergence in  $\caV_m$ implies convergence in $\mathcal{D}'([0,1]\times \bbT_m)$. 
The claim follows from continuity of $\frs^{-m}:\mathcal{D}'([0,1]\times \bbT_m)\to \mathcal{D}'([0,L^{-2m}]\times \bbT_1)$ and from the fact that convergence of $ \eta = \partial_x u$ implies convergence of $ u$.
 \qed 

\section{Proof of Proposition \ref{prop: mainproba}} \label{proof_prop}

We now need to show that for some $ \gamma >0$ the conditions defining the set $ \mathcal{E}_m$ hold almost surely for some $ m <\infty$. To do this, as in \cite{AK} the strategy is to control the 
covariances of the various fields in \eqref{eq: noisefields1} and establish
enough regularity for them. 

We will deduce Proposition \ref{prop: mainproba} from a covariance bound for  the fields in \eqref{eq: noisefields1}.  Let $\zeta_n^{(N)}(t,x)$ or  $\zeta_n^{(N)}(t,x,s,y)$ be any of the fields in \eqref{eq: noisefields1}. 
Let
  \beq
 \tilde K_n(t',t,x)=e^{\hf\dist(t',I_n)}K(t'-t,x)h_{n}(t)
 \label{eq:ktildedefi}
\eeq
where $I_n=[0,L^{2(n-m)}]$ and define 
 $$\rho^{(N)}_n=\tilde K_n \zeta_n^{(N)} \ \ \ \mbox{or}\ \ \ \rho^{(N)}_n=\tilde K_n\otimes \tilde K_n
 \zeta_n^{(N)}
.$$
Then
  \beq
\|K\tilde\zeta^{(N)}_n\|_{L^2(\mathfrak{c}_i)}\leq C e^{-\hf \dist (i_0,I)}\|\rho^{(N)}_n\|_{L^2(\mathfrak{c}_i)}.
\label{eq:rhozeta}
\eeq
where $i_0$ is the time component of the center of the cube $ \frc_i$.  
From now on in the random fields we will drop the superscrip $ (N)$ referring to the ultraviolet cutoff and we recall that $ \norm{\cdot}$ indicates the euclidean norm for a three-dimensional vector and the Hilbert-Schimdt norm for $ 3 \times 3$-matrix.
The following proposition proved in Section \ref{covariance_bounds} provides bounds for the covariance of $\rho_n$.  

\begin{proposition}\label{pro: proba1}
There exist renormalization constants $m_1, m_2, m_3 \in \R^3$ and $\la>0$ such that for all $0\leq n\leq N<\infty$ and for some constant $0 < c < \hf$
\baq
\bbE\norm{\rho_n(t,x)}^2&\leq & C \label{eq: twopoint}\\
\bbE \norm{{\rho'}_n(t,x)-\rho_n(t,x)}^2&\leq & C L^{-\nu (N-n)}\|\chi-\chi'\|_\infty
  \label{eq: twopoint1}\\
  \bbE \norm{\rho_n(t,x,s,y}^2&\leq & C e^{-c(|t-s|+|x-y|)}  \label{eq: twopoint2}\\
\bbE \norm{{\rho'}_n(t,x,s,y)-\rho_n(t,x,s,y)}^2 &\leq & CL^{-\la (N-n)}e^{-c(|t-s|+|x-y|)} \|\chi-\chi'\|_\infty
  \label{eq: twopoint3}
\eaq
where $ \rho'_n = \tilde K \zeta'_n$, i.e. we replace the lower cutoff function $ \chi$ by a $\chi' $.
\end{proposition}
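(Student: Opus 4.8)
The approach I would take follows the pattern of \cite{AK}: reduce each of the bounds \eqref{eq: twopoint}--\eqref{eq: twopoint3} to a handful of explicit integrals via Wick's theorem, and control those integrals using the $L^p$--estimates on the kernels $Y_n^{(N)}$ ($p<3/2$) announced in Lemma \ref{lem: greg} together with Lemma \ref{lemma: heatkernel} for the covariance $\frC_n^{(N)}$. Concretely, $\vartheta_n=Y_n^{(N)}\ast\xi_n$ is Gaussian with covariance $\frC_n^{(N)}$, and by the formulas of Section~4 each of $\frz_{n,1},\frz_{n,2},\frz_{n,3},D\frz_{n,2}$ is a fixed polynomial of degree $\le4$ in $\vartheta_n$ whose coefficients are constants or convolutions with $Y_n^{(N)}$; after subtracting the relevant expectations it becomes a finite sum of multiple Wiener--It\^o integrals $I_k$, $k\le4$, in the white noise $\xi_n$. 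Writing $\rho_n=\tilde K_n\zeta_n$, one then has $\bbE\norm{\rho_n}^2=\sum_{k\le4}c_k\|A_k\|_{L^2}^2$, where each symmetrised kernel $A_k$ is obtained by convolving a product of factors $Y_n^{(N)}$ with the smoothing kernel $K$ and multiplying by the weight $e^{\frac12\dist(\cdot,I_n)}h_n$; expanding $\|A_k\|_{L^2}^2$ and contracting pairs of $\xi$'s turns this into a sum of integrals of products of factors $Y_n^{(N)}$, $K$ and $\frC_n^{(N)}=Y_n^{(N)}\star Y_n^{(N)}$, with finitely many diagram topologies since $k\le4$.

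I would first dispose of the lowest orders. For $\zeta_n=\vartheta_n$ one has $K\ast\vartheta_n=(K\ast Y_n^{(N)})\ast\xi_n$, so $\bbE\norm{K\ast\vartheta_n}^2=\|K\ast Y_n^{(N)}\|_{L^2}^2\le\|K\|_{L^2}^2\|Y_n^{(N)}\|_{L^1}^2\le C$ by Young's inequality, the uniform bound $\|K\|_{L^1\cap L^2(\R\times\T_n)}\le C$, and Lemma \ref{lem: greg} ($\sup_N|Y_n^{(N)}|\in L^1$ since $1<3/2$); the exponential weight $e^{\frac12\dist(t,I_n)}$ is beaten by the factor $e^{-|t-t'|}$ inside $K_1$ (recall $h_n$ localises to $I_n$), which is exactly why any $c<\tfrac12$ works. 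For $\frz_{n,1}=(\vartheta_n,M\vartheta_n)-L^{N-n}m_1=\w{(\vartheta_n,M\vartheta_n)}+\delta_n^{(N)}$ (with $\delta_n^{(N)}$ uniformly bounded by Lemma \ref{lemma: heatkernel}, which also fixes $m_1$) the Wick square contributes a second--chaos term whose squared $L^2$--kernel norm is, up to a fixed tensor contraction of $M$, $\int\!\int K(t-w)K(t-w')\,\frC_n^{(N)}(w-w')^2\,dw\,dw'\le\|K\|_{L^2}^2\|\frC_n^{(N)}\|_{L^2}^2$, and $\|\frC_n^{(N)}\|_{L^2}=\|Y_n^{(N)}\star Y_n^{(N)}\|_{L^2}\le\|Y_n^{(N)}\|_{L^{4/3}}^2\le C$ by Young and Lemma \ref{lem: greg} ($4/3<3/2$). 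The third--chaos fields $\frz_{n,2}$ and $D\frz_{n,2}$ are handled the same way: in every contraction at least one copy of $K$ is left to be paired (via Young) with a free $Y_n^{(N)}$--line and turned into an $L^1$ or $L^{4/3}$ norm, while the remaining $Y_n^{(N)}$'s contract into $\frC_n^{(N)}$--factors estimated in $L^2$. For the bilocal $D\frz_{n,2}$ the explicit factor $Y_n^{(N)}(t-s,x-y)$ in the kernel $\sigma_n$ of \eqref{eq: sigma_n}, which has Gaussian spatial decay and is supported in $|t-s|\le2$, combines with the exponential decay of $K_1,K_2$ to produce the factor $e^{-c(|t-s|+|x-y|)}$ in \eqref{eq: twopoint2}.

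The delicate point, and the main obstacle, is the fourth--chaos field $\frz_{n,3}$. Its diagrams include one ``marginal'' term --- a logarithmically divergent loop contraction --- in which several factors $\frC_n^{(N)}$ are chained by $Y_n^{(N)}$--lines, giving an integral of the schematic form $\int(\frC_n^{(N)})^3$ that diverges logarithmically in the ultraviolet cutoff. The plan is to isolate exactly this contraction inside $\bbE$ of the quartic part of $\frz_{n,3}$, evaluate it in the scaling limit, and \emph{define} $m_2\in\R^3$ as the coefficient of $\log L^N$ and $m_3$ as the remaining $O(1)$ part, so that the subtraction $-m_2\log L^N-m_3$ in $u_{n,3}(0)$ renders $\bbE\norm{\rho_n}^2$ for $\frz_{n,3}$ uniformly bounded. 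The work here is to show that this divergent piece is genuinely of the form $(\text{const})\log L^N+(\text{const})+o(1)$ with both constants independent of $n$ and $N$ --- natural because the logarithm accumulates additively over the $N$ renormalisation steps, each contributing a bounded amount --- and to check, from the index structure of the two $M$--vertices feeding the loop, that the resulting cubic contraction of $M$ vanishes when $M^{(\alpha)}_{\beta\gamma}$ is totally symmetric, which gives the last assertion of Remark \ref{rem: renorm}. All the remaining fourth--order diagrams are uniformly convergent integrals bounded exactly as in the previous paragraph.

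Finally, the $\chi$-- and $N$--dependence bounds \eqref{eq: twopoint1} and \eqref{eq: twopoint3} follow by multilinearity. Since $\zeta_n'^{(N)}$ differs from $\zeta_n^{(N)}$ only through the replacement $Y_n^{(N)}\to Y_n'^{(N)}$ (in $\vartheta_n$ and in the convolution operators), $\zeta_n'^{(N)}-\zeta_n^{(N)}$ is a sum of terms in each of which exactly one $Y_n^{(N)}$ is replaced by $Y_n'^{(N)}-Y_n^{(N)}$ and the others by $Y_n^{(N)}$ or $Y_n'^{(N)}$. Writing $\bbE\norm{\rho_n'}^2-\bbE\norm{\rho_n}^2=\bbE\langle\rho_n'-\rho_n,\rho_n'+\rho_n\rangle$ and repeating the contractions above, that single difference--line is estimated in $L^p$, $p<3/2$, by the second bound of Lemma \ref{lem: greg}, producing the gain $L^{-\nu(N-n)}\|\chi-\chi'\|_\infty$ with $0<\nu\le\la$; the subtracted divergences cancel identically because $m_1$ and $m_2$ are $\chi$--independent, and the exponential decay in $|t-s|+|x-y|$ is preserved, giving \eqref{eq: twopoint3}. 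Together, \eqref{eq: twopoint}--\eqref{eq: twopoint3} are then upgraded in Section \ref{proof_prop} to Proposition \ref{prop: mainproba} by Nelson's hypercontractivity --- all moments of a fixed--order chaos variable being controlled by its variance --- combined with a Borel--Cantelli argument over the cubes $\frc_i$, the scales $n\le N$ and the cutoffs, the series converging thanks to the slack $L^{\ga n}$ in \eqref{eq: amevent1} and the geometric factors $L^{-\nu(N-n)}$, $L^{-\la(N-n)}$ in \eqref{eq: amevent2}.
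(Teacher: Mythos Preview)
Your overall architecture (Wick/chaos decomposition, Young's inequality with the $L^p$ bounds of Lemma~\ref{lem: greg}, multilinearity for the $\chi$--variation, and isolating the log--divergent expectation at third order) matches the paper's. However there is a genuine gap.

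You assert that for the third--chaos fields $\frz_{n,2}$ and $D\frz_{n,2}$ ``in every contraction at least one copy of $K$ is left to be paired (via Young) with a free $Y_n^{(N)}$--line''. This is not true. Wick--expanding $\vartheta_\alpha\,Y_n\!\ast\!\omega_{\beta\gamma}$ produces a first--chaos piece $\int J_n(z-z')\vartheta(z')\,dz'$ with $J_n(z):=Y_n(z)\,\frC_n(z)$, a \emph{pointwise} product, not a convolution; its $L^1$ norm on $\R\times\T_n$ diverges like $\log L^{N-n}$, so no choice of Young exponents using only the $L^p$ control of $Y_n$ and $\frC_n$ from Lemma~\ref{lem: greg} closes. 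The same $J_n$ is exactly the expectation of the bilocal kernel $\sigma_n$ (eq.~\eqref{bilocal2}), so your argument for \eqref{eq: twopoint2} also fails there. And in the fourth--order field the Wick expansion generates, among other terms, $Z_n\!\ast\!\omega$ and $\w{\vartheta\,Z_n\!\ast\!\vartheta}$ with $Z_n=Y_n\!\ast\!J_n$, for which one again needs more than the raw $L^p$ bounds.

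The paper's remedy is Lemma~\ref{lem:deco_C}: one shows $J_n=\partial_x W_n+j_n$ with $W_n\in L^1$ uniformly in $n,N$ and $j_n$ bounded with exponential spatial decay, and that $Z_n=Y_n\!\ast\!J_n\in L^1$ uniformly. The derivative $\partial_x$ is then integrated by parts onto the smoothing kernel $\tilde K$ (which has bounded $x$--derivative, eq.~\eqref{eq:decayK}), after which Young applies with $\|W_n\|_1$ in place of the divergent $\|J_n\|_1$; see items (vi), (viii), (ix), (xi) in Section~\ref{covariance_bounds}. The same lemma (part (d), comparing $Y_n$ with $2\partial_x\frC_n$ and $W_n$ with $\frC_n^2$) is what makes the third--order renormalization computation go through cleanly: after an integration by parts the two contributions to $\E\frz_{n,3}$ combine into $4(\caM_2-\caM_1)\int(Y\!\ast\!\theta\frC^2)Y$ plus uniformly bounded remainders, from which $m_2=\tfrac{\pi}{\sqrt3}(\caM_2-\caM_1)$ and the total--symmetry cancellation of Remark~\ref{rem: renorm} are read off. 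Your sketch locates the divergence correctly in spirit but does not identify this mechanism; without Lemma~\ref{lem:deco_C} (or an equivalent device) the variance bounds for $\frz_{n,2}$, $D\frz_{n,2}$ and parts of $\frz_{n,3}$ do not close.
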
 

Now we can prove Proposition \ref{prop: mainproba}: we recall that we want to show that there exist $ 0 \leq m < \infty$ such that the event $ \caE_m$ holds almost surely, where $ \caE_m$ is the event such that bounds \eqref{eq: amevent1}, \eqref{eq: amevent2} and \eqref{eq: Gammaxibound} hold for any $ m \leq n \leq N$.  By using the same strategy as in \cite{AK} based on the bounds in \cite{boga, nualart}, one can see that Proposition \ref{pro: proba1} implies the following bounds for the random fields $ \zeta_n^{(N)}$ in \eqref{eq: noisefields1} for all $ p >1$
\begin{align}
\bbP(\|\tilde\zeta_n^{(N)}\|_{\caV_n}\geq L^{\ga n}) & \leq C L^{-2m}L^{(3-2 \gamma p)n} \label{eq:bound_local}\\
\bbP \bigg( \|\tilde{\zeta'}_n^{(N)}-\tilde\zeta_n^{(N)}\|_{\caV_n} & \geq L^{-\hf\ga(N-n)} L^{\ga n} \bigg)
\leq C L^{-p\ga(N-n)} L^{(3-2\ga p) n}L^{-2m} \label{eq:bound_bilocal}
\end{align}
Furthermore, to deal with the last condition on $\caE_m$ in \eqref{eq: Gammaxibound}, we note that $\zeta:=\Upsilon_n *\xi_{n-1}$ is a Gaussian field with covariance
$$
\bbE\zeta(t',x')\zeta(t,x)=
-\Delta_{x'}\int_0^{\infty}H_n(t'-t+2s,x'-x)
\chi(t'-t+s)\chi(s)ds\label{eq:zetacov}
$$
where $\chi$ is smooth with support in $[L^{-2},2]$.
$
\bbE\zeta(t',x')\zeta(t,x)$ is smooth, compactly supported in $t'-t$ and exponentially decaying
in $x'-x$. We get then by standard Gaussian estimates \cite{boga} for $0 \leq j \leq 2$ and $0\leq j' \leq 2$ and for some $ c(L) >0$
\begin{align}\label{eq:bound_Ups_xi}
 \bbP \bigg(\sup_{\alpha}\|\partial_t^j\partial_x^{j'} (\Upsilon_n *\xi_{n-1})_{\alpha}\|_{L^\infty(\frc_i)}>R \bigg)\leq C e^{-c(L)R^2}
\end{align}
and thus 
\begin{align}
 \bbP \bigg(\sup_{\alpha} \|(\Upsilon_n * \xi_{n-1})_{\alpha}\|_{\Phi_{n}}> L^{2\ga n} \bigg) & \leq C L^{-2m} L^{3n}e^{-c(L) L^{4\gamma n}}\,.
 \label{eq: gaxiest}
\end{align}
The bounds \eqref{eq:bound_local}, \eqref{eq:bound_bilocal} and \eqref{eq: gaxiest} implies that $ \bbP(\caE^c_m) \leq C L^{-2m}$, then 
Proposition \ref{prop: mainproba} follows from Borel-Cantelli Lemma.

\subsection{Proof of Proposition \ref{pro: proba1}}\label{covariance_bounds}

We will now study the random fields in \eqref{eq: noisefields1}, i.e. 
$$ \zeta_n\in \{ \vartheta_n, \frz_{n,i}, D\frz_{n,2}\}
$$
that enter the probabilistic estimates. 

Consider first their expectations. Setting $ z=(t,x)$ and using Lemma \ref{lemma: heatkernel}, the first one gives $ \E \frz_{n,1}=\delta_n \leq C$, while for the second order fields we have
 \begin{align}\label{bilocal2}
 \E \frz_{n,1}=\delta_n, \ \ \ \E\sigma_{\alpha \beta,n}(z,z')=\frm_{\alpha\beta} Y_n(z-z')\frC_n(z-z') 
 \end{align}
and finally for the third order field we get
\begin{align}\label{eq:expect_mu}
\E \frz_{n,3} =& 8 \caM_1 \int dz_1dz_2 Y_n(z_2)Y_n(z_1-z_2)\frC_n(z_1-z_2)\theta(t_1-t_2)\frC_n(z_1)\\ & + 2 \caM_2\int dz_1dz_2 Y_n(z_1)Y_n(z_2)\frC_n(z_1-z_2)^2- m_2 \log L^{N} - m_3
\non
\end{align}
where $ \theta(t)= \funit_{t \geq 0}$ is the Heaviside fuction and
\begin{align}\label{eq:m1m2}
\frm_{\alpha\beta}&=\sum_{\gamma, \delta}  M^{(\alpha)}_{\gamma \delta} M^{(\delta)}_{\gamma \beta} \\
(\caM_1)_{\alpha} & = \sum_{\beta_1 \beta_2 \beta_3 \beta_4}M^{(\alpha)}_{\beta_1 \beta_2}M^{(\beta_2)}_{\beta_3 \beta_4} M^{(\beta_4)}_{\beta_1 \beta_3}  \\\nonumber
(\caM_2)_{\alpha} & = \sum_{\beta_1 \beta_2 \beta_3 \beta_4}M^{(\alpha)}_{\beta_1 \beta_2}M^{(\beta_2)}_{\beta_3 \beta_4} M^{(\beta_1)}_{\beta_3 \beta_4} . 
\end{align}
Define  the random field $$
\omega_{\alpha\beta}:=\vartheta_\alpha\vartheta_\beta-\E\vartheta_\alpha\vartheta_\beta
$$ 
(here and below $\vartheta=\vartheta_n^{(N)}$). 
Then the local fields $\zeta_n$ 
are linear combinations of their expectations and the following random fields 
\begin{align}\label{eq:fields1}
&\vartheta_\alpha, \ \ \omega_{\alpha\beta},   \ \ Y_n \ast \omega_{\alpha\beta},  \ \  Y_n \ast \omega_{\alpha\beta}
Y_n \ast\omega_{\gamma\delta} - \E Y_n \ast \omega_{\alpha\beta}
Y_n \ast\omega_{\gamma\delta},\\&\label{eq:fields2}
\ \ \vartheta_\alpha Y_n \ast \omega_{\beta\gamma}
, \ \ \vartheta_\alpha Y_n \ast  (\vartheta_\beta Y_n \ast \omega_{\gamma\delta})-\E \vartheta_\alpha Y_n \ast  (\vartheta_\beta Y_n \ast \omega_{\gamma\delta})
\end{align}
where we used $Y_n \ast\delta_n=0$, while for the bi-local fields we need to consider
\begin{align}
Y_n(z-z')\vartheta_\alpha(z)\vartheta_\beta(z')-\E Y_n(z-z')\vartheta_\alpha(z)\vartheta_\beta(z') \label{bilocal}
\end{align} 
To get the covariance estimates for the fields \eqref{eq:fields1}, \eqref{eq:fields2}, \eqref{bilocal} claimed in Proposition \ref{pro: proba1} 
we need to introduce the mixed covariance $ \frC'_n(z)$ such that 
\beq
\delta_{\alpha \beta}\frC'_n(z):= \bbE \vartheta'_{\alpha}(z) \vartheta_{\beta}(0)
\label{eq:CNN'def }
\eeq
where, as before, the primed kernels and fields have the lower cutoff $\chi'$. Furthermore, let us define 
\baq
 \caC_n(z)&:=& \sup_{N\geq n} |\frC'_n(z)|
\non
\\
 \delta \caC_n(z)&:=& |\frC'_n(z)-\frC_n(z)|
\non
\\
\caY_n(z)&:=&\sup_{N\geq n} |Y_n(z)| 
\non
\\
\delta \caY_n(z)&:=&|Y'_n(z) -Y_n(z) |
\non
\eaq
The regularity of these kernels is summarized in the following Lemma proven in the Appendix.

\begin{lemma}\label{lem: greg} 
\begin{enumerate}
\item[(a)] For $p<3$ and  uniformly in $n$ one has $ \caC_n\in L^p(\bbR\times\bbT_n)$ and
\begin{align}
\|\delta \caC_n\|_p^p\leq C L^{-\la_p(N-n)}\|\chi-\chi'\|_\infty \label{eq: cacnnbound}
\end{align}
for some $ \la_p >0$.
\item[(b)] For $p<\frac{3}{2}$ and uniformly in $n$ one has $  \caY_n\in L^p(\bbR\times\bbT_n)$ and
\begin{align}
\|\delta \caY_n\|_p^p\leq C L^{-\la_p(N-n)}\|\chi-\chi'\|_\infty.\label{eq: cagnnbound}
\end{align}
for some $ \la_p >0$.
\end{enumerate}
 
\end{lemma}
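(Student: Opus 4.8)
The plan is to estimate the two kernels $\caC_n$ and $\caY_n$ by reducing everything to the scaled heat kernel on the torus $\bbT_n$ and exploiting the explicit time-integral representations \eqref{eq: fracCn-ndef1} and \eqref{eq: cayn}. First I would record the basic pointwise bounds. From \eqref{eq:heat_torus} one has the Gaussian estimate $|H_n(t,x)|\leq C t^{-1/2}\sum_{i}e^{-c(x+iL^n)^2/t}$ on $\bbT_n$, valid uniformly in $n$, and similarly $|\partial_x H_n(t,x)|\leq C t^{-1}\sum_i |x+iL^n|\,e^{-c(x+iL^n)^2/t}$. Inserting the cutoff $\chi_{N-n}$, which forces $t$ into the interval $[L^{-2(N-n)},2]$, and performing the $\tau$-integral in \eqref{eq: fracCn-ndef1}, I would show that $\caY_n(t,x)$ and $\caC_n(t,x)$ are bounded by $C$ times a kernel of the form $t^{-a}e^{-cx^2/t}$ (plus Gaussian tails from the periodization, which decay like $e^{-cL^{2n}}$ and are harmless) with $a=1$ for $\caY_n$ and $a=1/2$ for $\caC_n$. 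The small-$t$ singularity is cut off at $t\sim L^{-2(N-n)}$, but since we only need $L^p$ bounds \emph{uniform in $N$}, the point is that these singularities are $L^p$-integrable for the stated ranges of $p$ without the cutoff: $\int_0^2\int_{\bbR} (t^{-1/2})^p e^{-cpx^2/t}\,dx\,dt \sim \int_0^2 t^{-p/2+1/2}\,dt<\infty$ iff $p<3$, and $\int_0^2\int_{\bbR}(t^{-1})^p e^{-cpx^2/t}\,dx\,dt\sim\int_0^2 t^{-p+1/2}\,dt<\infty$ iff $p<3/2$. This gives parts (a) and (b) of the claimed $L^p$ membership uniformly in $n$ (the periodicity only helps).

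For the difference estimates \eqref{eq: cacnnbound} and \eqref{eq: cagnnbound}, the plan is to localize the effect of changing the lower cutoff. Replacing $\chi$ by $\chi'$ in $\chi_{N-n}(s)=\chi(s)-\chi(L^{2(N-n)}s)$ changes the kernels only through the factor $\chi(L^{2(N-n)}s)-\chi'(L^{2(N-n)}s)$, which is supported in $s\in[L^{-2(N-n)},2L^{-2(N-n)}]$ and bounded by $\|\chi-\chi'\|_\infty$. Hence $\delta\caY_n$ and $\delta\caC_n$ are supported, in the time variable, in a region $t\lesssim L^{-2(N-n)}$ (up to the $\tau$-convolution, which at most doubles this scale), and on that region they obey the same pointwise bounds as above times $\|\chi-\chi'\|_\infty$. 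Computing the $L^p$ norm over this shrinking time window then produces the gain: $\int_0^{CL^{-2(N-n)}} t^{-p/2+1/2}\,dt \leq C L^{-2(N-n)(3/2 - p/2)}$ for the $\caC$ case and $\int_0^{CL^{-2(N-n)}}t^{-p+1/2}\,dt\leq C L^{-2(N-n)(3/2-p)}$ for the $\caY$ case, so one may take $\la_p = 2(3/2-p/2)$, resp.\ $\la_p=2(3/2-p)$, which is strictly positive precisely in the stated ranges $p<3$, resp.\ $p<3/2$. One must also handle the auxiliary case $\chi'(s)=\chi(L^2 s)$ used to compare consecutive $N$ (giving $Y_n'^{(N)}=Y_n^{(N+1)}$); here the difference is supported at $t\sim L^{-2(N-n)}$ as well and the same computation applies with $\|\chi-\chi'\|_\infty$ bounded by an absolute constant, which is all that is needed since the gain $L^{-\la_p(N-n)}$ does the summing.

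The main obstacle I anticipate is bookkeeping rather than conceptual: carefully tracking how the nested time-cutoffs $\chi_{N-n}(t+\tau)\chi_{N-n}(\tau)$ interact with the $\tau$-integration so that the support statements in $t$ are genuinely of size $L^{-2(N-n)}$ and not larger, and making sure the $\Delta_x$ (two $x$-derivatives) acting on $H_n$ in \eqref{eq: fracCn-ndef1} does not worsen the $t$-power beyond $t^{-1/2}$ after the $\tau$-integral — i.e.\ that the two derivatives are partly absorbed by integrating $\tau$ over an interval of length $\sim t$. A clean way to see this is to differentiate under the integral and use $\int_0^\infty \partial_x^2 H_n(t+2\tau,x)\,d\tau = -\tfrac12\big(H_n(t+2\tau,x)\big)\big|$ evaluated via $\partial_\tau H = \Delta H$, trading the Laplacian for a boundary term at the endpoints of the $\tau$-integral, where the cutoffs are smooth; the endpoint at $\tau\sim L^{-2(N-n)}$ is what produces the $t^{-1/2}$ singularity and the $L^{-\la_p(N-n)}$ gain. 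Once this reduction is in place the remaining estimates are the elementary one-dimensional integrals above, and the periodization sums converge geometrically with no loss uniform in $n$.
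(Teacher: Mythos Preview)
Your plan for the $L^p$ membership of $\caC_n$ and $\caY_n$ is fine and matches the paper's direct approach (the paper integrates $|\Delta H_n(t+2s,x)|$ in $s$ rather than using $\partial_\tau H=\Delta H$, but either works). Your treatment of $\delta\caY_n$ is also correct: since $Y_n^{(N)}(t,x)=\partial_xH_n(t,x)\chi_{N-n}(t)$, the difference is genuinely supported in $t\in[L^{-2(N-n)},2L^{-2(N-n)}]$ and your computation goes through.

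The gap is in your argument for $\delta\caC_n$. The claim that $\delta\caC_n$ is supported in $t\lesssim L^{-2(N-n)}$ is false: in $\frC_n(t,x)=-\Delta\int_0^\infty H_n(t+2\tau,x)\chi_{N-n}(t+\tau)\chi_{N-n}(\tau)\,d\tau$, changing the lower cutoff localizes $\tau$ to $[\epsilon^2,2\epsilon^2]$ with $\epsilon=L^{-(N-n)}$, but the factor $\chi_{N-n}(t+\tau)$ still allows $t$ to range over all of $[0,2]$. The parenthetical ``up to the $\tau$-convolution, which at most doubles this scale'' is the mistake. Your fallback via integration by parts does not rescue this: trading $\Delta$ for $\partial_\tau$ and picking up the boundary contribution at $\tau\sim\epsilon^2$ leaves you with roughly $H_n(t+c\epsilon^2,x)$ on $t\in[0,2]$, whose $L^p$ norm is $O(1)$ with no $\epsilon^{3-p}$ gain (and it also costs a derivative of $\chi-\chi'$).

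The paper obtains the gain by \emph{not} integrating by parts for the difference: one keeps $|\Delta H_n(t+2s,x)|\sim (t+2s)^{-3/2}e^{-cx^2/(t+2s)}$ and integrates $s$ only over $[0,2\epsilon^2]$, obtaining a kernel $\ell_{N-n}(t,x)$ that satisfies the exact scaling $\ell_M(t,x)=L^M\ell_0(L^{2M}t,L^Mx)$. A change of variables then gives $\|\ell_M\mathbf{1}_{[0,2]}\|_p^p=L^{(p-3)M}\|\ell_0\mathbf{1}_{[0,2L^{2M}]}\|_p^p$, and the growth of the last norm in $M$ (from the tail $\ell_0(t,x)\sim t^{-3/2}e^{-cx^2/t}$ for $t\geq 2$) is strictly slower than $L^{(3-p)M}$ for every $p<3$, which yields $\lambda_p>0$. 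In short, the extra $(t+2s)^{-1}$ from the Laplacian, combined with the short $s$-window, is exactly what produces the gain; removing it by integration by parts throws the gain away.
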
 

Having these technical tools at hand, we can finally start to show the covariance estimates.

\subsection{Fields \eqref{eq:fields1} and \eqref{bilocal}}

For $ z=(t,x)$ we will use the norm $|z|=|t|+|x| $ and we will drop the subscript $ n$ from the random fields and kernels.  
From the definition of the smoothing kernel $ \tilde K$ we note that
\begin{align}\label{eq:decayK}
\tilde K(z,z') \leq C e^{-\hf|z-z'|}, \ \ \ \partial_x \tilde K(z,z') \leq C e^{-\hf|z-z'|}, \ \ \ \partial_{x'} \tilde K(z,z') \leq C e^{-\hf|z-z'|}.
\end{align} 
 
Defining $$ X(z_1-z_2):= \E \zeta(z_1)\zeta(z_2)$$
we then get 
\begin{align}\label{eq:zeta_bound}
\E \norm{\rho(z)}^2 & = \int dz_1 dz_2 \tilde K(z,z_1) \tilde K(z,z_2)   X(z_1-z_2)  \leq C \norm{ X}_1
\end{align}
i.e. it suffices to bound the $L^1$-norm of the covariance. We will use repeatedly the Young  inequality in the form
\begin{align}\label{eq:young}
\norm{f _1\ast f_2\ast\dots\ast f_m}_p \leq \prod_{i=1}^m \norm{f_i}_{p_i}
\end{align}
if $n-1+\frac{1}{p}=\sum\frac{1}{p_i}$ where  $ 1 \leq p,p_i \leq \infty$. 
We consider now the fields one by one. 

\vskip 2mm

\noindent (i) For $\zeta=\vartheta_\alpha$ we have $ \norm{X}_1 \leq C \norm{ \caC}_1  $. 

\vskip 2mm

\noindent (ii) For $\zeta=\omega_{\alpha\beta}$ we have $ \norm{X}_1 \leq C \norm{ \caC}_2^2  $.

\vskip 2mm

\noindent (iii) For $\zeta= Y\ast\omega_{\alpha\beta}$ let $Y^t(z)=Y(-z)$. Then
$
X=C Y\ast\frC\ast Y^t
$. By Young inequality
$$
 \norm{X}_1
\leq C\norm{\caY * \caY * \caC^2}_1\leq C \norm{\caY}_1^2 \norm{\caC}_2^2
$$

\vskip 2mm

\noindent (iv) For $\zeta=
Y\ast \omega_{\alpha\beta}
Y\ast\omega_{\gamma\delta}-\E Y\ast \omega_{\alpha\beta}
Y\ast\omega_{\gamma\delta} $ we get
\begin{align}
& \int dz \E \zeta(z) \zeta(0) \leq C\int dz dz_1\dots dz_4 \caY(z-z_1)\caY(z-z_2)\caY(-z_3)\caY(-z_4)\\\non 
& \times [\caC(z_1-z_3)\caC(z_2-z_4)(\caC(z_1-z_4)\caC(z_2-z_3)+\caC(z_1-z_2)\caC(z_3-z_4)) \\\non
&+\caC^2(z_1-z_3)\caC^2(z_2-z_4) + \caC^2(z_1-z_4)\caC^2(z_2-z_3)]
\non
\end{align}
Using the trivial inequality 
\begin{align}\label{eq:simple_dis}
 2|ab| \leq a^2 + b^2
\end{align}
with $a, b \in \R$ for the products of $ \caC$, we obtain
\begin{align}\label{eq:tricky_graphs}\non
\int dz \E \zeta(z) \zeta(0) \leq C & \int dz dz_1\dots dz_4 \caY(z-z_1)\caY(z-z_2)\caY(-z_3)\caY(-z_4)  \\\non 
& \times [\caC(z_1-z_3)\caC(z_2-z_4) (\caC(z_1-z_4)^2+\caC(z_2-z_3)^2) \\\non
& +\caC(z_1-z_2)\caC(z_3-z_4) (\caC(z_1-z_3)^2+\caC(z_2-z_4)^2) \\
&+\caC^2(z_1-z_3)\caC^2(z_2-z_4) + \caC^2(z_1-z_4)\caC^2(z_2-z_3)]
\end{align}
Note that $ \caC_n \in L^p$ with $ p<\frac{3}{2}$ thanks to Lemma \ref{lem: greg}, so by Young inequality one can see that the first two terms in \eqref{eq:tricky_graphs} are bounded by


\begin{align}\non
&  C \| (\caY \ast \caC) (\caY \ast (\caC^2_n (\caY \ast \caY \ast \caC)) \|_1 \leq C \| \caY \ast \caC \|_2  \| \caY \ast \caC^2 (\caY \ast \caY \ast \caC)     \|_2 \\\non
& \leq C  \| \caY \|_1 \| \caC  \|_2 \| \caY \|_{\frac{4}{3}} \| \caC^2 \|_{\frac{4}{3}} \| \caY \ast \caY \ast \caC \|_{\infty} \leq C \| \caY \|_1 \| \caY \|^3_{\frac{4}{3}} \| \caC  \|_2^2  \| \caC^2 \|_{\frac{4}{3}} 
\end{align}
while the third and fourth term in \eqref{eq:tricky_graphs} are bounded by
\begin{align}
&  C \| \caC(\caY \ast \caY) \|_1 \| \caC^2 \ast (\caY (\caY \ast \caC)) \|_1 \leq C \| \caC \|_2 \| \caY \ast \caY \|_2 \| \caC^2 \|_1 \| \caY (\caY \ast \caC) \|_1  \\\non
& \leq C \| \caC \|^4_2 \| \caY \|^4_{\frac{4}{3}} 
\end{align}
and the last constributions are bounded by
\begin{align}
&  C \| \caY \ast \caY \ast \caY \ast \caC^2 \|_2 \| \caC^2 \ast \caY \|_2 \leq C \| \caY \|_1^2  \| \caY \|_{\frac{4}{3}}^2 \| \caC^2 \|^2_{\frac{4}{3}} . \non
\end{align}
\vskip 2mm
\noindent (v) Next we consider the bi-local field $\zeta(z_1,z_2)=
Y(z_1-z_2)(\vartheta_\alpha(z_1)\vartheta_\beta(z_2)-\delta_{\alpha \beta}\frC(z_1-z_2))$. Then we have 
\begin{align}\non
\E\zeta(z_1,z_2)\zeta(z_3,z_4) \leq C\caY(z_1-z_2)\caY(z_3-z_4)(\caC(z_1-z_3)\caC(z_2-z_4)+\caC(z_1-z_4)\caC(z_2-z_3))
\end{align}
so that 
\begin{align}\nonumber
& e^{c|z_1-z_2|} \E \norm{\rho_n(z_1,z_2)}^2  \leq C  e^{c|z_1-z_2|} \bigg| \int dz'_{1234} \tilde K(z_1,z'_1)\tilde K(z_2,z'_2) \tilde K(z_1,z'_3)\tilde K(z_2,z'_4) \\\nonumber
&  \times  \caY(z'_1-z'_2)\caY(z'_3-z'_4)[\caC(z'_1-z'_3)\caC_n(z'_2-z'_4) + \caC_n(z'_1-z'_4)\caC_n(z'_2-z'_3)] \bigg|
\end{align}
where $ 0 < c < \hf$ and then 
\begin{align}\nonumber
& \E \norm{\rho_n(z_1,z_2)}^2 \leq C e^{-c|z_1-z_2|} \| \tilde{\caY} * \caY * \caC \ast \caC  \|_1 \leq C e^{-c|z_1-z_2|} \| \tilde \caY \|_1 \| \caY  \|_1 \| \caC  \|_1^2 
\end{align}
where $\tilde \caY(z):=e^{c|z|} \caY(z)$ is in $  L^p$ with $ p < \frac{3}{2}$.

\subsection{Fields \eqref{eq:fields2} and \eqref{bilocal2} }

We observe that in the above covariance estimates, the Young inequality trick requires all the kernels to be at least in $ L^1(\R \times \bbT_n)$. Unfortunately in the fields \eqref{eq:fields2} and \eqref{bilocal2} the kernel $J_n(z):=Y_n(z)\frC_n(z) $ will appear and it is easy to see that $ \| J_n \|_1$ diverges logarithmically as $ N \to 0$, so Young inequality cannot be applied as before.

The following Lemma shows some properties of $ J_n$ which are crucial to overcome this problem. Its proof can be found in the Appendix.

\begin{lemma}\label{lem:deco_C}  
(a) We have 
\begin{align}
J_n(z) = \partial_x W_n(z)+j_n(z)
\end{align}
where $ W_n$ is in $L^1(\R \times \bbT_n)$ uniformly in $n,N$ and 
$$
|j_n(z)|\leq Ce^{-|x|}\funit_{[0, 2]}(t).
$$
\vskip 2mm
\noindent (b) The function $Z_n:=Y_n\ast J_n$ is in $L^1(\R \times \bbT_n)$ uniformly in $n,N$.\vskip 2mm
\noindent (c) $\| W_n- W'_n\|_1\leq CL^{-\la(N-n)}$ for some $ \la >0$, idem for $j_n$ and $Z_n$.
\vskip 2mm
\noindent (d) Let be $ \epsilon=L^{-2(N-n)}$, then  
$$
|W_n(z)- \frC_n(z)^2|\leq C(\epsilon^{-2}e^{-c|x|/\epsilon}\funit_{[\hf\epsilon^2,2\epsilon^2]}(t)+e^{-c|x|}\funit_{[\hf,2]}(t))
$$
$$
|Y_n(z)-2\partial_x \frC_n(z)|\leq C(\epsilon^{-2}e^{-c|x|/\epsilon}\funit_{[\hf\epsilon^2,2\epsilon^2]}(t)+e^{-c|x|}\funit_{[\hf,2]}(t)).
$$
\end{lemma}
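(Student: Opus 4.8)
The engine of the lemma is part (d): once the two pointwise near-identities $Y_n = 2\partial_x\frC_n + (\text{error})$ and $W_n = \frC_n^2 + (\text{error})$ are established with the errors supported at the two cut-off scales, parts (a), (b), (c) follow from it together with the $L^p$-regularity already recorded in Lemma~\ref{lem: greg}. So I would prove (d) first, starting from the explicit representations $\frC_n(t,x) = -\int_0^\infty (\partial_x^2 H_n)(t+2\tau,x)\,\chi_{N-n}(t+\tau)\chi_{N-n}(\tau)\,d\tau$ of \eqref{eq: fracCn-ndef1} and $Y_n(t,x) = (\partial_x H_n)(t,x)\,\chi_{N-n}(t)$ of \eqref{eq: cayn}. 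Using the heat equation $\partial_t H_n = \partial_x^2 H_n$ one has $(\partial_x^3 H_n)(t+2\tau,x) = \tfrac12\,\tfrac{d}{d\tau}(\partial_x H_n)(t+2\tau,x)$, so that integrating by parts in $\tau$ — the boundary term at $\tau=0$ vanishing because $\chi_{N-n}(0)=\chi(0)-\chi(0)=0$, the one at $\tau=\infty$ by compact support — gives
\[
2\,\partial_x\frC_n(t,x)=\int_0^\infty (\partial_x H_n)(t+2\tau,x)\,\bigl[\chi_{N-n}'(t+\tau)\,\chi_{N-n}(\tau)+\chi_{N-n}(t+\tau)\,\chi_{N-n}'(\tau)\bigr]\,d\tau .
\]
Now $\chi_{N-n}'=\chi'-L^{2(N-n)}\chi'(L^{2(N-n)}\cdot)$ is a sum of two bumps, one of unit mass concentrated at the scale $\epsilon^2=L^{-2(N-n)}$ with amplitude $\sim\epsilon^{-2}$, the other of mass $-1$ at the scale $1$. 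Replacing the UV bump in the last term by a point mass at $\tau=0$ reproduces $-Y_n(t,x)$ (this is the source of the factor $2$), while the error in this replacement, the IR bump, and the whole $\chi_{N-n}'(t+\tau)$ term are, by the Gaussian bounds on $H_n$ and its $x$-derivatives (including the periodisation over $\T_n$), bounded by the two kernels on the right of (d): the UV kernel from the $\epsilon^2$-bump, the IR kernel from the $1$-bump. The identity for $W_n$ is obtained the same way; in fact I would simply \emph{define} $W_n:=\frC_n^2+V_n$, where $\partial_x V_n$ is the UV-concentrated part of $J_n-2\frC_n\partial_x\frC_n$, so that the estimate $|W_n-\frC_n^2|=|V_n|$ is built in (and $V_n$ is even in $x$, since $J_n-2\frC_n\partial_x\frC_n$ is odd, so its $x$-primitive on $\T_n$ exists).

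With (d) in hand part (a) is short: by construction $j_n:=J_n-\partial_x W_n=\frC_n\,(Y_n-2\partial_x\frC_n)-\partial_x V_n$ has its UV contributions cancelled, leaving a term supported (up to an $e^{-|x|}$ tail) on $t\in[0,2]$ and bounded by $C e^{-|x|}\funit_{[0,2]}(t)$. For the uniform $L^1$-bound on $W_n$ the only input is the parabolic estimate $|\frC_n(t,x)|\le C\min\!\bigl(d(t,x)^{-1},\epsilon^{-1}\bigr)$, $d(t,x)=\sqrt{|t|}+|x|$, which comes from \eqref{eq: fracCn-ndef1} and heat-kernel bounds and is essentially the content of Lemma~\ref{lem: greg}(a); since $\int_{\R\times\T_n}\min(d^{-2},\epsilon^{-2})\,dz\le C$ uniformly in $n,N$ — the exponent $2$ being strictly below the critical exponent $3$ — both $\frC_n^2$ and $V_n$, hence $W_n$, lie in $L^1$ uniformly. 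Part (c), for $W_n$, $j_n$ and (below) $Z_n$, follows by multilinearity: each of these is a finite sum of products and convolutions of $Y_n$'s and $\frC_n$'s, so the difference with the primed object telescopes into terms carrying exactly one factor $\delta\caY_n$ or $\delta\caC_n$, and Lemma~\ref{lem: greg} bounds this factor by $CL^{-\la_p(N-n)}\|\chi-\chi'\|_\infty$ in the relevant $L^p$ while the remaining factors are uniformly bounded by (a) and (b).

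Part (b) is where the real work is, because $\|J_n\|_1$ diverges logarithmically in $N-n$ and Young's inequality in the form $\|Z_n\|_1\le\|Y_n\|_1\|J_n\|_1$ is useless. Here I would use the decomposition from (a): $Z_n=Y_n*J_n=(\partial_x Y_n)*W_n+Y_n*j_n$. The second term is harmless, since $Y_n\in L^1$ uniformly (Lemma~\ref{lem: greg}(b) with $p=1<\tfrac32$) and $|j_n|\le Ce^{-|x|}\funit_{[0,2]}(t)$. For the first term write it as $Y_n*\partial_x W_n$ and use that $W_n$ is even in $x$, so $\partial_x W_n$ has vanishing $x$-mean on every time slice of $\T_n$; subtracting this mean inside the convolution produces an extra factor $|x'|$ against the ${d'}^{-3}$-type kernel $\partial_x W_n(t',x')$, turning it into a ${d'}^{-2}$-type, hence uniformly integrable, kernel, after which one Young step closes the bound (equivalently, one may run the same mean subtraction directly on $Y_n*J_n$, using that $J_n=Y_n\frC_n$ is odd in $x$). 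I expect the localisation bookkeeping in (d) — checking that the error terms really concentrate at the scales $\epsilon^2$ and $1$ and decay as the stated Gaussians/exponentials — to be the most delicate part, with the zero-mean estimate in (b) the only other genuinely non-routine step.
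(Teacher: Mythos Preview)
Your route is genuinely different from the paper's. The paper works almost entirely on the Fourier side: it first strips off the torus correction (this is the source of $j_n$), then observes that on $\R$ the spatial Fourier transform of $J_\epsilon$ factors as $\widehat J_\epsilon(t,p)=\tfrac{ip}{\sqrt t}\,\widehat{\caW}_\epsilon(t,\sqrt t\,p)$ with $\widehat{\caW}_\epsilon(t,\cdot)$ entire and Gaussian on strips; this \emph{defines} $W_\epsilon(z)=t^{-1}\caW_\epsilon(t,x/\sqrt t)$, and the parabolic scaling immediately yields $\|W_\epsilon(t,\cdot)\|_{L^1_x}\le C t^{-1/2}$, hence $W_\epsilon\in L^1$ uniformly. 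Part (b) is done the same way: $\widehat{A}_\epsilon(t,p)=t^{-1/2}\hat a_\epsilon(t,\sqrt t\,p)$ with $\hat a_\epsilon$ again entire/Gaussian. Part (d) is proved last, by comparing the Fourier integrands of $J_\epsilon$ and $\partial_x\frC_\epsilon^2$ via the elementary bound $|2h_\epsilon(t,\sqrt t\,p)-\chi_\epsilon(t)|\le C(\funit_{[\epsilon^2/2,2\epsilon^2]}+\funit_{[1/2,2]})(t)$. So the paper's $W_n$ is not $\frC_n^2+V_n$ at all; it is the exact $x$-primitive of $J_n$ on $\R$, and $j_n$ is purely the periodisation error.

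Your physical-space programme---integration by parts in $\tau$ for the $Y_n-2\partial_x\frC_n$ identity, then defining $W_n:=\frC_n^2+V_n$---is a legitimate alternative for (a), (c), (d), and arguably makes the link between (d) and (a) more transparent. The place where you should be more careful is (b). The zero-mean trick is the right idea, but it does \emph{not} reduce to ``one Young step'': after subtracting $Y_n(t-t',x)$ you no longer have a convolution, and what you actually get is
\[
\|Y_n*\partial_x W_n\|_1\;\le\;\int |\partial_{x'}W_n(z')|\Bigl(\int_{\epsilon^2}^{2}\min\bigl(|x'|\,\|\partial_x Y_n(\tau,\cdot)\|_1,\;2\|Y_n(\tau,\cdot)\|_1\bigr)\,d\tau\Bigr)\,dz'.
\]
The inner $\tau$-integral is not $\le C|x'|$ but rather $\le C|x'|(1+|\log|x'||)$ for $|x'|\ge\epsilon$ and $\le C|x'|\log\epsilon^{-1}$ for $|x'|\le\epsilon$; the extra logarithm is harmless once fed back against the regularised $d^{-3}$ bound on $\partial_{x'}W_n$, but you have to check this explicitly---the ``$d'^{-2}$, hence one Young step'' summary does not capture it. The paper's Fourier/scaling argument sidesteps this entirely and is cleaner here; your approach buys a more hands-on picture of where the cancellations live, at the cost of a genuinely more delicate estimate in (b).
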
 

In practice Lemma \ref{lem:deco_C} guarantees that the nasty kernel $ J_n$ is actually a gradient of an $ L^1$-function, up to to a smooth correction. By an integration by parts this property will allow us to move the gradient and make it act on the smoothing kernels $ \tilde K$, so that we can still use the Young inequality. Moreover, we point out that the kernel $ Z_n$ will appear in the last fields in \eqref{eq:fields2}. Finally, item (d) in Lemma \ref{lem:deco_C} will be employed to study the divergence of $ \E \frz_{n,3}$ in \eqref{eq:expect_mu}.

In the following we will neglect the remainder term $ j(z)$, since its contributions can be easily bounded as we have done for the fields in \eqref{eq:fields1}.

\vspace*{2mm}

\noindent (vi) For $ \zeta = \vartheta_{\alpha} Y \ast \omega_{\beta \gamma}$ we have
\begin{align}
\E \| \rho(z) \|^2 \leq & C \bigg| \int dz_{1234} \tilde K(z,z_1) \tilde K(z,z_2) J(z_1-z_3)J(z_2-z_4) \frC(z_3-z_4)   \bigg| \\\non
& + C \int dz dz_1 dz_2 \caY(z-z_1)\caY(-z_2)[\caC(z)\caC^2(z_1-z_2) + \caC(z-z_2)\caC(z_1)\caC(z_1-z_2)]
\end{align}
Using Lemma \ref{lem:deco_C}, \eqref{eq:decayK} and Young inequality we get
\begin{align}
\E \| \rho(z) \|^2 \leq & C[ \| W \|_1 \| \caC \ast W \|_1 + \| \caC \ast \caY \|_2 \| \caC^2 \ast \caY \|_2 + \| \caC(\caC \ast \caY) \ast \caY \|_2 \| \caC \|_2] \\\non
 \leq & C [ \| W \|_1^2 \| \caC \|_1  + \| \caY \|_1 \| \caC \|_2 \| \caC^2 \|_{\frac{4}{3}} \| \caY \|_{\frac{4}{3}} + \| \caY \|_1 \| \caY \|_{\frac{4}{3}}  \| \caC \|_2 \| \caC^2 \|_{\frac{4}{3}}].
\end{align}

 For the last field in \eqref{eq:fields2}, i.e. $ \vartheta_\alpha Y \ast  (\vartheta_\beta Y \ast \omega_{\gamma\delta})-\E \vartheta_\alpha Y \ast  (\vartheta_\beta Y \ast \omega_{\gamma\delta})$ it is convenient to perform an expansion in terms of Wick polynomials to keep track of the several contributions involved in the covariance (see \cite{LM16} for a recent review about Wick polynomials). In our case the ``elementary'' fields $ \vartheta_n$ are Gaussian and with vanishing expectation value, so the combinatorics of the Wick expansion will be quite simple. Noting that $ \omega_{\alpha \beta}= \w{\vartheta_{\alpha} \vartheta_{\beta}}$, the random fields turns out to be a linear combination of the following terms
\begin{align}\label{eq:wick_exp}\non
& \w{\vartheta_\alpha Y \ast  (\vartheta_\beta Y \ast \vartheta_\gamma \vartheta_{\delta})}, \ \ \ Z \ast  \omega_{\alpha \beta}, \ \ \ \w{\vartheta_{\alpha} Z \ast \vartheta_{\beta}}, \\
&  \int dz_1dz_2 Y(z-z_1)Y(z_1-z_2) \frC(z-z_2) \w{\vartheta_{\alpha}(z_1)\vartheta_{\beta}(z_2)}.
\end{align}

\vspace*{2mm} 

\noindent (vii) In $ \zeta = \w{\vartheta_\alpha Y \ast  (\vartheta_\beta Y \ast \vartheta_\gamma \vartheta_{\delta})}$ there is no $ J$ appearing, so we can just estimate the corresponding $ \| X \|_1$ which unfortunately has many terms:

\begin{align}\label{eq:mu1}\non
& \| X \|_1 \leq C \int dz dz_{1} \cdots dz_4 {\caY}(z-z_1) {\caY}(z_1-z_2) {\caY}(-z_3) {\caY}(z_3-z_4) \\\nonumber
& \times [\mathcal C(z)\mathcal C(z_1-z_3)  \mathcal C^2(z_2-z_4)   + \mathcal C(z-z_3)\mathcal C(z_1)  \mathcal C^2(z_2 - z_4) \\\nonumber
& + \mathcal C(z)\mathcal C(z_1-z_4)\mathcal C(z_2-z_3)\mathcal C(z_2-z_4) + \mathcal C(z-z_3)\mathcal C(z_1-z_4)\mathcal C(z_2)\mathcal C(z_2-z_4) \\\nonumber 
& + \mathcal C(z-z_4)\mathcal C(z_1)\mathcal C(z_2-z_3)\mathcal C(z_2-z_4)  + \mathcal C(z-z_4)\mathcal C(z_1-z_3)\mathcal C(z_2)\mathcal C(z_2-z_4) \\\non
&+ \mathcal C(z-z_4)\mathcal C(z_1-z_4)\mathcal C(z_2-z_3)\mathcal C(z_2)].
\end{align}
Using \eqref{eq:simple_dis} we get
\begin{align}\non
& \| X \|_1 \leq C [\| \caC(\caY \ast \caY \ast \caC^2) \|_{\frac{4}{3}} \| \caC \ast \caY \|_4 + \| \caC \|_2 \| \caY \ast (\caY \ast \caC)(\caY \ast \caY \ast \caC^2 )\|_2 \\\non
& + \| \caC \ast \caY \|_4 \| \caY \ast (\caY \ast \caC)(\caY \ast \caC^2)  \|_{\frac{4}{3}} + \| \caY \ast \caC \|_2 \| \caY \ast \caY \ast \caC (\caY \ast \caC^2) \|_2 \\\non
& + \| \caY \ast \caC^2 \|_2 \| \caY \ast (\caC \ast \caY)^2 \|_2 + \| \caY \|_1 \| (\caC \ast \caY)(\caY \ast \caC(\caC^2 \ast \caY) ) \|_1 \\\non
& + \| \caY \ast \caC^2 \|_{2} \| \caY \ast \caC(\caY \ast \caY \ast \caC) \|_2 + \| \caY \|_1 \| \caC(\caY \ast (\caC \ast \caY)(\caC^2 \ast \caY))  \|_1 \\\non
& + \| \caY  \|_1 \| (\caY \ast \caC^2)(\caY \ast \caC(\caC \ast \caY)) \|_1 + \| \caC^2 \caY \|_2 \|  \caY \ast \caY \ast \caC(\caC \ast \caY)\|_2\\\non
& \leq C [ \| \caY\|_{\frac{4}{3}}^4 \| \caC \|_2^4 + \| \caY \|_1 \| \caY\|_{\frac{4}{3}}^3 \| \caC \|_2^2 \| \caC^2 \|_{\frac{4}{3}}].
\end{align}

\vspace*{2mm}
\noindent (viii) For $ \zeta = Z \ast  \omega_{\alpha \beta}$ we have by Lemma \ref{lem:deco_C}(b) 
\begin{align}
\| X  \| \leq C  \| Z \ast \caC^2 \|_1\| Z \|_1 \leq C \| Z \|_1^2 \| \caC^2 \|_1.
\end{align}

\vspace*{2mm}
\noindent (ix) For $ \zeta = \w{\vartheta_{\alpha} Z \ast \vartheta_{\beta}}$ again by Lemma \ref{lem:deco_C}(b) we have
\begin{align}
\| X  \| \leq C \| Z \ast \caC \|_2^2 \leq C \| Z \|_1^2 \| \caC \|_2^2 . 
\end{align}

\vspace*{2mm}
\noindent (x) For $ \zeta =  \int dz_1dz_2 Y(z-z_1)Y(z_1-z_2) \frC(z-z_2) \w{\vartheta_{\alpha}(z_1)\vartheta_{\beta}(z_2)}$ by \eqref{eq:simple_dis} we have
\begin{align*}
\| X \|_1 \leq C & \int dz dz_1 \cdots dz_4 \caY(z-z_1)\caY(z_1-z_2)\caC(z-z_2)\caY(-z_3)\caY(z_3-z_4)\caC(-z_4) \\\non
& \times [\caC^2(z_1-z_3) + \caC^2(z_2-z_4) +\caC^2(z_1-z_4) +\caC^2(z_2-z_3) ] \\\non
\leq C & \| \caY \|_{\frac{4}{3}}^4 \| \caC \|_2^4.
\end{align*}

\vspace*{2mm} 
\noindent (xi) We still need to bound $ \zeta(z_1-z_2) = \E \sigma_{\alpha \beta}(z_1,z_2)= \frm_{\alpha \beta} J(z_1-z_2)$. Using Lemma \ref{lem:deco_C}(a) and the same strategy used for (v) we get
\begin{align}
e^{c|z_1-z_2|} \| \rho(z_1,z_2) \|^2 \leq C \| \tilde W \|_1^2
\end{align}
where $ \tilde W(z)= e^{c|z|}W(z) \in L^1(\R \times \bbT_n)$.

We observe that the estimates \eqref{eq: twopoint1} and \eqref{eq: twopoint3} are obtained as the bounds \eqref{eq: twopoint} and \eqref{eq: twopoint2} derived above by using Lemma \ref{lemma: heatkernel}, \eqref{eq: cacnnbound}, \eqref{eq: cagnnbound} and Lemma \ref{lem:deco_C}(c).

\subsection{Third order renormalization}

So we are left with the analysis of the expectation  $ \E \frz_{3}$ which will allow us to determine the renormalization constants $ m_2,m_3$:
\begin{align*}
\E \frz_{3} =&2\caM_2\int (Y\ast \frC^2)(z) Y(z)dz
 +8\caM_1\int (Y\ast \theta J)(z)\frC(z)dz-m_2 \log L^N -m_3\\
 =& 4\caM_2\int (Y\ast\theta \frC^2)(z)Y(z)dz+
 8\caM_1\int (Y\ast \theta J)(z)\frC(z)dz -m_2 \log L^N -m_3
\end{align*}
where $(\theta\frC^2)(z)=\theta(t)\frC^2(z)$ and similarly for $ \theta J$. Let us call $ \caA = \partial_x \frC - \hf Y$ and $ \caB=W-\frC^2$. Using Lemma \ref{lem:deco_C} and an
 integration by parts in the second term we get 
\begin{align}
\E \frz_{n,3}=& 8\caM_1\int[ (Y\ast \theta j)(z)\frC(z)-(Y\ast \theta \frC^2)(z)\caA(z)-(Y \ast \theta \caB)(z)\frC(z)]dz -m_3 \nonumber
 \\
& + 4(\caM_2-\caM_1)\int (Y\ast\theta \frC^2)(z) Y(z)dz -m_2 \log L^N 
\label{dive}
\end{align}
For the first term we use the bounds in  Lemma \ref{lem:deco_C} to get 
\begin{align}
\bigg| \int[ (Y\ast \theta j)(z)\frC(z)-(Y\ast \theta \frC^2)(z)\caA(z)-(Y \ast \theta \caB)(z)\frC(z)]dz \bigg| \leq C.
\end{align}

Let us now study the second term in \eqref{dive} which is the divergent one. In Fourier space we have
\begin{align}
& \int (Y\ast\theta \frC^2)(z) Y(z)dz  \\\non
 = & \int_0^{\infty} dt \int dp \, \widehat Y(t,-p) \int_0^t ds \, \widehat Y(t-s,p) \int dq \, \widehat \frC(s,p+q) \widehat \frC (s,q) \\\non
 = & \int_0^{\infty} dt \,\chi_{\epsilon}(t) \int dp \, p^2 e^{- t p^2} \int_{0}^t ds \, \chi_{\epsilon}(t-s) e^{-(t-s)p^2} \int dq \, e^{-s(q^2 + (p+q)^2)} \\\non
& \times h_{\epsilon}(s, \sqrt{s}(p+q)) h_{\epsilon}(s, \sqrt{s}q)
\end{align} 
where
\begin{align}\label{eq:h_eps}
h_\epsilon(t,p)=p^2 \int_0^{\infty} d\sigma \, e^{-2\sigma p^2} \chi_{\epsilon}((1+\sigma)t)\chi_{\epsilon}(\sigma t)
\end{align}
and $ 0 \leq h_{\epsilon} < \hf$ uniformly on $\R_+ \times \R$.
Let us define $ \mu_{\epsilon}$ as
\begin{align}
\mu_{\epsilon} := \frac{1}{4} \int_{\epsilon^2}^{2} dt \, \int dp \, p^2 e^{- t p^2} \int_0^t ds \, e^{-(t-s)p^2}\funit_{[\epsilon^2,2]}(t-s) \int dq \, e^{-s(q^2 + (p+q)^2)} .
\end{align}
We get that 
\begin{align}\non
 0  \leq &  \,\mu_{\epsilon} - \int (Y\ast\theta \frC^2)(z) Y(z)dz   \\\non
\leq & \,\frac{1}{4} \int_{\epsilon^2}^{2} dt \, \int dp \, p^2 e^{- t p^2} \int_0^t ds \, e^{-(t-s)p^2} \int dq \, e^{-s(q^2 + (p+q)^2)} \\
& \times [\funit_{[\epsilon^2,2]}(t)(\funit_{[\epsilon^2,2\epsilon^2]}(t-s)+\funit_{[1,2]}(t-s)) + \funit_{[\epsilon^2,2]}(t-s)(\funit_{[\epsilon^2,2\epsilon^2]}(t)+\funit_{[1,2]}(t))] \leq C
\label{eq:diff_mu_graph}
\end{align}
Let us also define $ \tilde \mu_{\epsilon}:= \frac{\pi}{4\sqrt{3}}\log\epsilon^{-1}$, then by an explicit computation one gets 
\begin{align}\label{eq:diff_mu}
\lim_{\epsilon \to 0} (\tilde \mu_{\epsilon}-\mu_{\epsilon})= \caO(1).
\end{align}
 Therefore, we can identify the universal renormalization constant $ m_2$ as
\begin{align}
m_2 = 4(\caM_2-\caM_1)\frac{\tilde \mu_{\epsilon}}{\log \epsilon^{-1}} = \frac{\pi}{\sqrt{3}}(\caM_2-\caM_1).
\end{align}
Finally, for the $ \chi$-dependent renormalization constant $ m_3$, let be $ \nu_{\epsilon}:= \E \frz_{n,3}-m_2 \log \epsilon^{-1}$: from \eqref{eq:mu1}, \eqref{eq:diff_mu_graph} and \eqref{eq:diff_mu} we know that $|\nu_\epsilon| \leq C$ and by bounds similar to ones in Lemma \ref{lem:deco_C} comparing different cutoffs one can see that $ \nu_{\epsilon}$ is a Cauchy sequence. Therefore, in the end we obtain 
\begin{align}
m_3 = \lim_{\epsilon \to 0} (\E \frz_{n,3}-m_2 \log \epsilon^{-1}) = \caO(1).
\end{align}

\begin{remark}[Cancellation of the third order divergence]
We observe that for some special class of vectors of symmetric matrices $ M=(M^{(1)},M^{(2)},M^{(3)})$ the normalization constants $ m_2$ and $ m_3$ are not needed, i.e. $ m_2=m_3=0$ (for example this is the case of the ordinary KPZ equation where $ u \in \R$). 

In fact, if $ M^{(\alpha)}_{\beta \gamma}$ is totally symmetric with respect to three indices, i.e. it is also invariant under the swap $ \alpha \leftrightarrow \beta$, then $ \caM_1=\caM_2$ in \eqref{eq:expect_mu} and \eqref{eq:m1m2} and the divergent term is not present.

\end{remark}

\appendix
\section{Proof of Lemma \ref{lemma: heatkernel}}

From \eqref{eq: fracCn-ndef} one has
$$
\bbE (\vartheta^{(N)}_n(t,x),M^{(\alpha)}\vartheta^{(N)}_n(t,x)) = \bigg(\sum_{\beta=1}^3 M^{(\alpha)}_{\beta \beta} \bigg) \frC^{(N)}_n(0,0)
$$
Let us split $\mathfrak{C}^{(N)}_n(0,0) $ by isolating the term corresponding to $ i=0$ in \eqref{eq:heat_torus}:
\begin{align}\label{eq:Cn_split}
\mathfrak{C}^{(N)}_n(0,0) & = \frac{1}{2^{7/2} \sqrt{\pi}}\int_0^{\infty} \frac{\chi(s)^2 - \chi'(L^{2(N-n)} s)^2}{s^{3/2}}\,ds + R.
\end{align}
where to stress the cutoff dependence we wrote this with the lower cutoff $\chi'$.

The remainder is easily bounded by
$$
R
\leq C e^{-cL^{2n}}.
$$
and its change with cutoff by
\begin{align*}
|R-R' |  
& 
 \leq C e^{-cL^{2N}}\|\chi - \chi'\|_{\infty}.
\end{align*} 
For the main term in \eqref{eq:Cn_split} we define 
$$
\rho_\chi=\int_0^{\infty} \frac{1 - \chi( s)^2}{s^{3/2}}\,ds .
$$
Then
\begin{align}\nonumber
&  \int_0^{\infty} \frac{\chi(s)^2 - \chi'(L^{2(N-n)} s)^2}{ s^{3/2}}\,ds  =L^{N-n}\rho_{\chi'}-\rho_\chi
\end{align}
Setting $\delta_n^{(N)}=\sum_{\beta=1}^3 M^{(\alpha)}_{\beta \beta}(R-\rho_\chi)$ the claim follows.
\qed

\section{Proof of Lemma \ref{lem: greg}}

(a) We have:
\beq
\frC'_n(t, x)=-\Delta \int_0^{\infty}H_n(t+2s,x)\chi_{N-n}(t+s)\chi'_{N-n}(s)ds
\label{eq:CNN'defi }
\eeq
where $\chi'_{N-n}(t)=\chi(t)-\chi(L^{2(N-n)}t)$. Therefore, since $\chi_{N-n}(t+s)\chi'_{N-n}(s) \leq \funit_{[0,2]}(s)\funit_{[0,2]}(t)$, one has
\begin{align}\label{eq:C'}
| \frC'_n(t, x)| \leq C \funit_{[0,2]}(t) \sum_{j \in \Z} \ell(t,x+jL^n)
\end{align}
where
\begin{align}\label{eq:bound_ell}
\ell(t,x+jL^n)& \leq C \int_0^2 ds (t+2s)^{-\frac{3}{2}} e^{-\frac{x^2}{4(t+2s)}}[1 + x^2 (t + 2s)^{-1}] \\\nonumber
& \leq C e^{-cx^2}(x^2+t)^{-\hf}[1 + x^2(x^2+t)^{-1}]\funit_{[0,2]}(t) + e^{-cx^2/t}t^{-\frac{3}{2}}[1+x^2 t^{-1}]\funit_{[2,\infty)}(t)
\end{align}
Combining \eqref{eq:C'} with \eqref{eq:bound_ell} one gets
\beq
\caC_n(\tau,x)\leq C e^{-cx^2}(x^2+t)^{-\hf}[1 + x^2(x^2+t)^{-1}]\funit_{[0,2]}(t) \in L^p(\R \times \T_n)
\label{eq: cbound}
\eeq
for $p<3$. To show \eqref{eq: cagnnbound}, note that
$$\chi_{N-n}( t+s)|\chi_{\epsilon}(s)-\chi'_{N-n}(s)|\leq \funit_{[\epsilon^2,2\epsilon^2]}(s) \funit_{[0,2]}(t)
\|\chi-\chi'\|_\infty$$ where $ \epsilon= L^{-(N-n)}$.
Hence 
\beq
\delta\caC_n(t,x)\leq C
 \sum_{j \in \Z}\ell_{N-n}(t,x + jL^n)\funit_{[0,2]}(t)\|\chi-\chi'\|_\infty 
\eeq
where
\beq
\ell_{M}(t,x):=\int_0^{2L^{-2M}}  (t+2s)^{-\frac{3}{2}} e^{-\frac{x^2}{4(t+2s)}}[1 + x^2 (t + 2s)^{-1}]ds=L^{M}
\ell_{0}(L^{2M}t,L^{M}x).
 \label{eq:cMtaux }
\eeq
Hence using \eqref{eq:bound_ell} we have
\begin{align}
\|\ell_{M}(t,x)\funit_{[0,2]}(t)\|_p^p & =L^{-(3-p)M}\|\ell_{0}(t,x)\funit_{[0,2L^{2M}]}(t)\|_p^p \\\nonumber
& \leq C L^{-(3-p)M}\bigg(1+\int_2^{2L^{2M}} t^{\frac{3}{2}(1-p)} dt \bigg)\leq C L^{-\la M}
\end{align}
with $\la>0$ for $p<3$. 

(b) The claim follows with the same strategy employed in item (a).
\qed

\section{Proof of Lemma \ref{lem:deco_C}  }

First of all, we note that we can replace $ H_n$ (the heat kernel on $ \T_n$) by $ H$ (the heat kernel on $ \R$) in  $ {\frC_n}$ and $ \caY_n $. Indeed, letting $\tilde \caK$ denote the kernels $ {\frC_n}$, $ \caY_n $ and $ J_n $ built out of $H$ we get
\begin{align}\nonumber
|\tilde \caK(z)-\caK(z)|\leq Ce^{-|x|}\funit_{[0,2]}(t).
\end{align}
Therefore, in the following proof we will consider kernels built with $ H$ and drop the tildes. 

Let  $ \epsilon = L^{-(N-n)}$ and $ \chi_{\epsilon}=\chi_{N-n}$. We will indicate the scale dependence of the kernels by $ \epsilon$ instead of $ n$, i.e. $ \frC_n = \frC_{\epsilon}$ and so on. We work in Fourier space in the $x$ variable:
\begin{align}\nonumber
\widehat \frC_{\epsilon}(t,p)& = p^2e^{-tp^2} \int_0^{\infty} ds \, e^{-2sp^2} \chi_{\epsilon}(t + s)\chi_{\epsilon}( s)
= e^{-tp^2}h_\epsilon(t,\sqrt{t} p)
\end{align}
where $ h_{\epsilon}$ is defined in \eqref{eq:h_eps} and it is uniformly bounded on $\R_+\times\R$. For $Y_{\epsilon}$ we have
\begin{align}\nonumber
\widehat Y_{\epsilon}(t,p)& = \iu p \, e^{-tp^2}  \chi_{\epsilon}(t ).
\end{align}
Thus
\begin{align}\label{hatJ}
\widehat J_{\epsilon}(t,p)&=\iu \int_\R dq (p+q)e^{-t((p+q)^2+q^2)}h_\epsilon(t,\sqrt{t} q)  \chi_{\epsilon}(t )= \frac{\iu p}{\sqrt{t}}\widehat\caW_\epsilon(t,\sqrt{t}p)
\end{align}
where
$$
\widehat\caW_\epsilon(t,r)=\int_\R dq(1+q/r)e^{-((r+q)^2+q^2)}h_\epsilon( t,q) \chi_{\epsilon}(t ).
$$
$\widehat\caW_\epsilon$  is an entire function in $r$ with
\beq \label{enire}
|\widehat \caW_\epsilon(t,r)|\leq C e^{-c(\Re{r})^2}
\eeq
if $|\Im{r}|\leq 1$ (we used $h( t,q)=h(t,-q)$). Hence in particular the inverse Fourier transform $\caW_\epsilon(t,x)$ is in $L^1(\R)$ uniformly in $t$. We end up with the claim with
$$
W_\epsilon(z)=\frac{1}{{t}}\caW_\epsilon(t,x/\sqrt{t}).
$$
(b) It suffices to study $A_\epsilon=Y_\epsilon \ast \partial_x W_\epsilon$. We get
\begin{align}\nonumber
\widehat A_{\epsilon}(t,p)&=-p^2\int_0^t e^{-(t-s)p^2} \chi_{\epsilon}(t-s )\frac{1}{\sqrt{s}}\widehat W_\epsilon(s,\sqrt{s}p)ds=\frac{1}{\sqrt{t}}\hat a_\epsilon(t,\sqrt{t}p)
\end{align}
with
$$
\hat a_\epsilon(t,p)=-p^2\int_0^1e^{-(1-\sigma)p^2} \chi_{\epsilon}((1-\sigma)t )\frac{1}{\sqrt{\sigma}}\widehat W_\epsilon(\sigma t,\sqrt{\sigma}p)d\sigma.
$$
$\hat a_\epsilon$ is entire satisfying \eqref{enire} and the claim follows.
\vskip 2mm
\noindent (c) These claims follow from
\beq \label{enire1}
|\widehat \caW_\epsilon(t,r)-\widehat \caW'_\epsilon(t,r)|\leq C e^{-c(\Re{r})^2}\funit_{[\hf\epsilon^2,2\epsilon^2]}(t)
\eeq
\vskip 2mm
\noindent (d) Let $B_{\epsilon}=\partial_x \frC_{\epsilon}^2 = 2\frC_{\epsilon}\partial_x\frC_{\epsilon}$. Then
\begin{align}\nonumber
\widehat B_{\epsilon}(t,p)&=2\iu \int_\R dq (p+q)e^{-t((p+q)^2+q^2)}h_\epsilon(t,\sqrt{t} p)h_\epsilon(t,\sqrt{t} q) 
\end{align}
Comparing with \eqref{hatJ} and noting that
$$
|2h_\epsilon( t,\sqrt{t} p)-\chi_{\epsilon}(t )|\leq C(\funit_{[\hf\epsilon^2,2\epsilon^2]}(t)+\funit_{[\hf,2]}(t))
$$
we get
$$
|J_{\epsilon}(z)-\partial_x \frC_{\epsilon}(z)^2|\leq C(\epsilon^{-3}e^{-c|x|/\epsilon}\funit_{[\hf\epsilon^2,2\epsilon^2]}(t)+e^{-c|x|}\funit_{[\hf,2]}(t)).
$$
In the same way we get
$$
|Y_{\epsilon}(z)-2\partial_x \frC_{\epsilon}(z)|\leq C(\epsilon^{-2}e^{-c|x|/\epsilon}\funit_{[\hf\epsilon^2,2\epsilon^2]}(t)+e^{-c|x|}\funit_{[\hf,2]}(t)).
$$
\qed


\begin{thebibliography}{}


\bibitem{hairer} M. Hairer: A theory of regularity structures. Invent. Math. 198(2), 269--504 (2014)

\bibitem{CC}  R. Catellier and K. Chouk: Paracontrolled distributions and the 3-dimensional stochastic quantization equation. ArXiv: 1310.6869 (2013) 

\bibitem{GoJa} P. Gon\c calves and M. Jara: Nonlinear fluctuations of weakly asymmetric interacting particle systems. Arch. Rational Mech. Anal. 212(2), 597--644 (2014)

\bibitem{GIP} M. Gubinelli, P. Imkeller, and N. Perkowski: Paracontrolled distributions and singular PDEs. Forum Math. Pi. 3 (2015)


\bibitem{AK}
A. Kupiainen: Renormalization group and Stochastic PDEs.
Ann. Henri Poincaré. 17(3), 497--535 (2016)

\bibitem{Sp14} H. Spohn: Nonlinear Fluctuating Hydrodynamics for Anharmonic Chains. J. Stat. Phys. 154(5), 1191--1227 (2014)

\bibitem{hairerkpz} M. Hairer: Solving the KPZ equation. Ann. Math. 178(2),  559--664 (2013)

\bibitem{wilson} K. Wilson: The renormalization group and critical phenomena. Nobel Lecture. Rev. Mod. Phys. (1984)


\bibitem{cha}
S.B. Chae: Holomorphy and calculus in normed spaces. Marcel
Decker, New York (1985)


\bibitem{bk} J. Bricmont, A. Kupiainen and G. Lin: Renormalization Group and Asymptotics of Solutions of
Nonlinear Parabolic Equations. 
Comm. Pure Appl. Math. 47, 893--922 (1994) 

\bibitem{bgk} J. Bricmont, K. Gawedzki and A. Kupiainen: KAM theorem and quantum field theory.
Comm. Math. Phys. 201(3), 699--727 (1999)








\bibitem{boga}	Vladimir I. Bogachev: Gaussian Measures. Mathematical Surveys and Monographs.	American Mathematical Society (1998)

\bibitem{nualart}  D. Nualart: The Malliavin calculus and related topics. Probability and its Applications (New York). Springer-Verlag, Berlin, second ed. (2006)

\bibitem{LM16} J. Lukkarinen and M. Marcozzi: Wick polynomials and time-evolution of cumulants. ArXiv: 1503.05851 (2016)

\end{thebibliography}
\end{document}